\newcommand{\cceq}{\mathop{::=}}
\newcommand{\myeq}[1]{\stackrel{\text{#1}}{\Leftrightarrow}}
\renewcommand{\epsilon}{\varepsilon}
\renewcommand{\phi}{\varphi}
\newcommand{\pow}[1]{2^{#1}}
\newcommand{\nats}{\mathbb{N}}
\newcommand{\set}[1]{\{ #1 \}}
\newcommand{\ap}[0]{\mathrm{AP}}
\newcommand{\ltl}{{LTL}\xspace}
\newcommand{\hyltl}{{HyperLTL}\xspace}
\newcommand{\hyctlstar}{{HyperCTL$^*$}\xspace}
\newcommand{\fol}{{FO}$[<]$\xspace}
\newcommand{\fole}{{FO}$[<,\,E]$\xspace}
\newcommand{\hyfol}{{HyperFO}\xspace}
\newcommand{\kltl}{{KLTL}\xspace}
\newcommand{\F}{{\mathbf{F\,}}}
\newcommand{\G}{{\mathbf{G\,}}}
\newcommand{\U}{{\mathbf{\,U\,}}}
\newcommand{\X}{{\mathbf{X\,}}}
\newcommand{\K}{{\mathbf{\,K\,}}}
\newcommand{\suffix}[2]{#1[#2,\infty]}
\newcommand{\var}{\mathcal{V}}
\newcommand{\phiswap}{\phi_{\mathrm{swp}}}
\newcommand{\phiproper}{\phi_{\mathrm{prp}}}
\newcommand{\bal}{\mathrm{bal}}
\newcommand{\el}[0]{\mathrm{E}}
\newcommand{\successor}[0]{\mathrm{Succ}}
\renewcommand{\succ}[0]{\successor}
\newcommand{\model}[1]{\underline{#1}}
\newcommand{\merge}{\mathrm{mrg}}
\newcommand{\fo}[1]{\mathrm{fo}(#1)}
\newcommand{\game}{\mathcal{G}}
\newcommand{\myquot}[1]{``#1''}
\title{The First-Order Logic of Hyperproperties\thanks{This work was partially supported by the German Research Foundation (DFG) under the project SpAGAT (FI 936/2-3) in the Priority Program 1496 ``Reliably Secure Software Systems'' and under the project TriCS (ZI 1516/1-1).}}
\author{Bernd Finkbeiner \and Martin Zimmermann}
\institute{Reactive Systems Group, Saarland University, 66123 Saarbrücken, Germany\\
\email{\{finkbeiner,zimmermann\}@react.uni-saarland.de}
}
\begin{document}

\maketitle


\begin{abstract}
We investigate the logical foundations of
hyperproperties. Hyperproperties generalize trace properties, which
are sets of traces, to sets of sets of traces.  The most prominent
application of hyperproperties is information flow security:
information flow policies characterize the secrecy and integrity of a
system by comparing two or more execution traces, for example by
comparing the observations made by an external observer on execution
traces that result from different values of a secret variable. 

In this
paper, we establish the first connection between temporal logics for
hyperproperties and first-order logic.  Kamp's seminal theorem (in the
formulation due to Gabbay et al.) states that linear-time temporal
logic (LTL) is expressively equivalent to first-order logic over the
natural numbers with order. We introduce first-order logic over sets
of traces and prove that HyperLTL, the extension of LTL to
hyperproperties, is strictly subsumed by this logic. We furthermore
exhibit a fragment that is expressively equivalent to HyperLTL,
thereby establishing Kamp's theorem for hyperproperties.
\end{abstract}


\section{Introduction}
\label{sec_intro}

Linear-time temporal logic
(LTL)~\cite{Pnueli/1977/TheTemporalLogicOfPrograms} is one of the most
commonly used logics in model
checking~\cite{Baier+Katoen/2008/PrinciplesOfModelChecking},
monitoring~\cite{Havelund+Rosu/04/Efficient}, and reactive
synthesis~\cite{DBLP:series/natosec/Finkbeiner16}, and a prime example
for the ``unusal effectiveness of logic in computer
science''~\cite{journals/bsl/HalpernHIKVV01}.  LTL pioneered the idea
that the correctness of computer programs should not just be specified
in terms of a relation between one-time inputs and outputs, but in
terms of the infinite sequences of such interactions captured by
the \emph{execution traces} of the program. The fundamental properties
of the logic, in particular its ultimately periodic model
property~\cite{Sistla:1985:CPL:3828.3837},
and the connection to first-order
logic via Kamp's theorem~\cite{Kamp68}, have been studied extensively
and are covered in various handbook articles and textbooks
(cf.~\cite{Demri+others/2016/Temporal,Thomas:1997:LAL:267871.267878}).

In this paper, we revisit these foundations in light of the recent
trend to consider not only the individual traces of a computer
program, but properties of \emph{sets} of traces,
so-called \emph{hyperproperties}~\cite{Clarkson+Schneider/10/Hyperproperties}.  The motivation for the study of
hyperproperties comes from information flow security.  Information
flow policies characterize the secrecy and integrity of a system by
relating two or more execution traces, for example by comparing the
observations made by an external observer on traces that result from
different values of a secret variable. Such a comparison can obviously
not be expressed as a property of individual traces, but it can be
expressed as a property of the full set of system traces. Beyond
security, hyperproperties also occur naturally in many other
settings, such as the symmetric access to critical resources in
distributed protocols, and Hamming distances between code words in
coding theory~\cite{DBLP:conf/cav/FinkbeinerRS15}.

HyperLTL~\cite{DBLP:conf/post/ClarksonFKMRS14}, the extension of LTL to hyperproperties, uses \emph{trace
quantifiers} and \emph{trace variables} to refer to multiple traces at
the same time. For example, the formula
\begin{equation}
	\forall \pi.\ \forall \pi'.\ \G (a_\pi \leftrightarrow a_{\pi'}) \label{hyperltlexample}
\end{equation} 
expresses that \emph{all} computation traces must \emph{agree} on the
value of the atomic proposition~$a$ at all times.  The extension is
useful: it has been shown that most hyperproperties studied in
the literature can be expressed in HyperLTL~\cite{markusPhD}. There
has also been some success in extending algorithms for model
checking~\cite{DBLP:conf/cav/FinkbeinerRS15},
monitoring~\cite{DBLP:conf/csfw/AgrawalB16}, and
satisfiability~\cite{DBLP:conf/concur/FinkbeinerH16} from LTL to
HyperLTL.  So far, however, we lack a clear understanding of how deeply
the foundations of LTL are affected by the extension. Of particular interest
would be a characterization of the models of the logic.
Are the models of a satisfiable HyperLTL formula still ``simple'' in the sense of the ultimately periodic model theorem of LTL? 

It turns out that the differences between LTL and HyperLTL are
surprisingly profound.  Every satisfiable LTL formula has a model that
is a (single) ultimately periodic trace. Such models are in particular
finite and finitely representable. One might thus conjecture that
a satisfiable HyperLTL formula has a model that consists of a finite
set of traces, or an $\omega$-regular set of traces, or at least \emph{some}
set of ultimately periodic traces. In Section~\ref{sec_models}, we
refute \emph{all} these conjectures.  Some HyperLTL formulas have only
infinite models, some have only non-regular models, and some have only
aperiodic models. We can even encode the prime numbers in HyperLTL! 

Is there some way, then, to characterize the expressive power of HyperLTL?
For LTL, Kamp's seminal theorem~\cite{Kamp68} (in the formulation due to Gabbay et al.~\cite{Gabbayetal80}) states that LTL is expressively equivalent to first-order logic $\fol$ over the natural numbers with order. In order to formulate a corresponding ``Kamp's theorem for HyperLTL,'' we have to decide how to encode sets of traces as relational structures, which also induces the signature of the first-order logic we consider. We chose to use relational structures that consist of disjoint copies of the natural numbers with order, one for each trace. To be able to compare positions on different traces, we add the \emph{equal-level predicate}~$\el$ (cf.~\cite{Thomas09}), which relates the same time points on different traces. The HyperLTL formula (\ref{hyperltlexample}), for example, is equivalent to the $\fole$ formula \[\forall x.\ \forall y.\ \el(x,y) \rightarrow (P_a(x)\leftrightarrow P_a(y)).\]

In Section~\ref{sec_fo}, we show that $\fole$ is \emph{strictly more expressive} than HyperLTL, i.e., every HyperLTL formula can be translated into an equivalent $\fole$ formula, but there exist $\fole$ formulas that cannot be translated to HyperLTL. Intuitively, $\fole$ can express requirements which relate at some point in time an \emph{unbounded} number of traces, which is not possible in HyperLTL. To obtain a fragment of $\fole$ that is expressively equivalent to HyperLTL, we must rule out such properties. We consider the fragment where the quantifiers either refer to initial positions or are guarded by a constraint that ensures that the new position is on a trace identified by an initial position chosen earlier. In this way, a formula can only express properties of the bounded number of traces selected by the quantification of initial positions. We call this fragment \hyfol, the \emph{first-order logic of hyperproperties}. Theorem~\ref{theo_Hyperkamp}, the main result of the paper, then shows that HyperLTL and \hyfol are indeed expressively equivalent, and thus proves that Kamp's correspondence 
between temporal logic and first-order logic also holds for hyperproperties.

All proofs omitted due to space restrictions can be found in the full version~\cite{DBLP:journals/corr/FinkbeinerZ16}.


\section{HyperLTL}
\label{sec_hyperltl}

Fix a finite set~$\ap$ of atomic propositions. A trace over $\ap$ is a map $t \colon \nats \rightarrow \pow{\ap}$, denoted by $t(0)t(1)t(2) \cdots$. The set of all traces over $\ap$ is denoted by $(\pow{\ap})^\omega$. The projection of $t$ to $\ap'$ is the trace $(t(0) \cap \ap') (t(1) \cap \ap') (t(2) \cap \ap') \cdots $ over $\ap'$. A trace~$t$ is ultimately periodic, if $t = t_0 \cdot t_1^\omega$ for some $t_0,t_1 \in (\pow{\ap})^+$, i.e., there are $s,p>0$ with $t(n) = t(n+p)$ for all $n \ge s$. A set~$T$ of traces is ultimately periodic, if every trace in $T$ is ultimately periodic.

The formulas of \hyltl are given by the grammar 
\begin{align*}
\phi & {} \cceq {}  \exists \pi.\ \phi \mid \forall \pi.\ \phi \mid \psi \\
\psi & {}  \cceq {}  a_\pi \mid \neg \psi \mid \psi \vee \psi \mid \X \psi \mid \psi \U \psi 
\end{align*}
where $a$ ranges over atomic propositions in $\ap$ and where $\pi$ ranges over a given countable set~$\var$ of \emph{trace variables}. Conjunction, implication, equivalence, and exclusive disjunction~$\oplus$ as well as the temporal operators eventually~$\F$ and always~$\G$ are derived as usual. A sentence is a closed formula, i.e., the formula has no free trace variables.

The semantics of \hyltl is defined with respect to a trace assignment, a partial mapping~$\Pi \colon \var \rightarrow (\pow{\ap})^\omega$. The assignment with empty domain is denoted by $\Pi_\emptyset$. Given a trace assignment~$\Pi$, a trace variable~$\pi$, and a trace~$t$ we denote by $\Pi[\pi \rightarrow t]$ the assignment that coincides with $\Pi$ everywhere but at $\pi$, which is mapped to $t$. Furthermore, $\suffix{\Pi}{j}$ denotes the assignment mapping every $\pi$ in $\Pi$'s domain to $\Pi(\pi)(j)\Pi(\pi)(j+1)\Pi(\pi)(j+2) \cdots $.

For sets~$T$ of traces and trace-assignments~$\Pi$ we define 
\begin{itemize}
	\item $(T, \Pi) \models a_\pi$, if $a \in \Pi(\pi)(0)$,
	\item $(T, \Pi) \models \neg \psi$, if $(T, \Pi) \not\models \psi$,
	\item $(T, \Pi) \models \psi_1 \vee \psi_2 $, if $(T, \Pi) \models \psi_1$ or $(T, \Pi) \models \psi_2$,
	\item $(T, \Pi) \models \X \psi$, if $(T,\suffix{\Pi}{1}) \models \psi$,
	\item $(T, \Pi) \models \psi_1 \U \psi_2$, if there is a $j \ge 0$ such that $(T,\suffix{\Pi}{j}) \models \psi_2$ and for all $0 \le j' < j$: $(T,\suffix{\Pi}{j'}) \models \psi_1$, 
	\item $(T, \Pi) \models \exists \pi.\ \phi$, if there is a trace~$t \in T$ such that $(T,\Pi[\pi \rightarrow t]) \models \psi$, and 
	\item $(T, \Pi) \models \forall \pi.\ \phi$, if for all traces~$t \in T$: $(T,\Pi[\pi \rightarrow t]) \models \psi$. 
\end{itemize}

We say that $T$ satisfies a sentence~$\phi$, if $(T, \Pi_\emptyset) \models \phi$. In this case, we write $T \models \phi$ and say that $T$ is a model of $\phi$. Although \hyltl sentences are required to be in prenex normal form, they are closed under boolean combinations, which can easily be seen by transforming such formulas into prenex normal form.


\section{The Models of HyperLTL}
\label{sec_models}

Every satisfiable \ltl formula has an ultimately periodic model, i.e., a particularly simple model: It is trivially finite (and finitely represented) and forms an $\omega$-regular language. An obvious question is whether every satisfiable \hyltl sentence has a simple model, too. Various notions of simplicity could be considered here, e.g., cardinality based ones,  being $\omega$-regular, or being ultimately periodic, which all extend the notion of simplicity for the \ltl case. In this section, we refute all these possibilities: We show that \hyltl models have to be in general infinite, might necessarily be non-regular, and may necessarily be aperiodic.


\subsection{No Finite Models}
\label{subsec_smallmodels}

Our first result shows that \hyltl does not have the finite model property  (in the sense that every satisfiable sentence is satisfied by a finite set of traces). The proof is a straightforward encoding of an infinite set of traces that appears again in the following proofs.

\begin{theorem}
\label{thm_finitenotenough}
There is a satisfiable \hyltl sentence that is not satisfied by any finite set of traces.
\end{theorem}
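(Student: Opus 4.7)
The plan is to exhibit a concrete \hyltl sentence whose models, by a simple induction, must contain a distinct trace for every natural number. The sentence will be built from three conjuncts that together force the set of traces to encode the sequence~$0, 1, 2, \dots$ via the position of a distinguished proposition~$a$. Concretely, set
\[
\phi_{\mathrm{once}} = \forall \pi.\, \F a_\pi \wedge \G(a_\pi \rightarrow \X \G \neg a_\pi),
\qquad
\phi_{\mathrm{base}} = \exists \pi.\, a_\pi,
\]
\[
\phi_{\mathrm{succ}} = \forall \pi.\, \exists \pi'.\, \G(a_\pi \rightarrow \X a_{\pi'}),
\]
and take $\phi = \phi_{\mathrm{once}} \wedge \phi_{\mathrm{base}} \wedge \phi_{\mathrm{succ}}$. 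Intuitively, $\phi_{\mathrm{once}}$ says that $a$ holds exactly once on every trace, $\phi_{\mathrm{base}}$ says that some trace has $a$ at position~$0$, and $\phi_{\mathrm{succ}}$ says that for every trace there is another trace whose unique $a$-position is one step later.

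For satisfiability, I would exhibit the witness~$T = \set{t_n \mid n \in \nats}$ where $t_n(n) = \set{a}$ and $t_n(m) = \emptyset$ for all $m \neq n$, and verify each conjunct directly: every $t_n$ has $a$ exactly at position~$n$, $t_0$ witnesses $\phi_{\mathrm{base}}$, and $t_{n+1}$ witnesses $\phi_{\mathrm{succ}}$ for $t_n$.

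For the lower bound, I would argue by induction that any model~$T$ of $\phi$ contains, for every $n \in \nats$, a trace whose unique $a$-position is $n$. The base case uses $\phi_{\mathrm{base}}$ together with $\phi_{\mathrm{once}}$ to obtain a trace with $a$ exactly at position~$0$. For the induction step, if $T$ contains a trace~$t$ with $a$ exactly at position~$n$, then $\phi_{\mathrm{succ}}$ supplies a trace~$t' \in T$ with $a$ at position~$n+1$, and $\phi_{\mathrm{once}}$ ensures this is the only occurrence of~$a$ on~$t'$. Since the traces produced by this induction are pairwise distinct, $T$ must be infinite.

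There is essentially no major obstacle: the argument is a carefully chosen encoding. The only point that requires a moment of care is phrasing ``$a$ holds exactly once'' using only temporal operators (so that $\phi_{\mathrm{once}}$ is a legitimate per-trace constraint inside a universal quantifier) and ensuring that $\phi_{\mathrm{succ}}$ really forces a strict increase of the $a$-position rather than merely reasserting the same trace, which is why the implication is guarded by $\X$ on the right-hand side.
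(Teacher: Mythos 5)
Your proposal is correct and follows essentially the same route as the paper: an ``exactly one $a$'' constraint, a base trace with $a$ at position~$0$, and a $\forall\exists$ successor conjunct that pushes the $a$-position one step to the right, followed by induction to show every model contains a trace with $a$ at each position $n$. The only differences are cosmetic choices of equivalent LTL formulas (e.g., $\F a_\pi \wedge \G(a_\pi \rightarrow \X\G\neg a_\pi)$ versus $(\neg a_\pi)\U(a_\pi \wedge \X\G\neg a_\pi)$, and $\G(a_\pi \rightarrow \X a_{\pi'})$ versus $\F(a_\pi \wedge \X a_{\pi'})$), which coincide under the ``exactly once'' constraint.
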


\begin{proof}
Consider the conjunction~$\phi$ of the following formulas  over~$\ap = \set{a}$:
\begin{itemize}

\item $\forall \pi.\ (\neg a_\pi) \U (a_\pi \wedge \X\G\neg a_\pi )$: on every trace there is exactly one occurrence of $a$.

\item $\exists \pi.\ a_\pi$: there is a trace where $a$ holds true in the first position.

\item $\forall \pi.\ \exists \pi'.\ \F (a_\pi \wedge \X a_{\pi'})$: for every trace, say where $a$ holds at position~$n$ (assuming the first conjunct is satisfied), there is another trace where $a$ holds at position~$n+1$.

\end{itemize}
It is straightforward to verify that $\phi$ is satisfied by the infinite set~$T = \set{ \emptyset^n \cdot \set{a} \cdot \emptyset^\omega \mid n \ge 0 }$ and an induction over $n$ shows that every model has to contain $T$. Here, one uses the first and second conjunct in the induction start and the first and third conjunct in the induction step. Actually, the first conjunct then implies that $T$ is the only model of $\phi$.
\end{proof}

Next, we complement the lower bound with a matching upper bound.

\begin{theorem}
\label{thm_countableenough}
Every satisfiable \hyltl sentence has a countable model.
\end{theorem}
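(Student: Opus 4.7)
The plan is a downward L\"owenheim--Skolem-style argument: take any model~$T$ of the satisfiable sentence~$\phi$ and extract a countable submodel~$T' \subseteq T$ by Skolemizing the existential quantifiers and closing a seed trace under the resulting Skolem functions.

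Write~$\phi$ in its prenex form $Q_1 \pi_1 \cdots Q_k \pi_k\, \psi$ with quantifier-free body~$\psi$, and let $I_\exists \subseteq \{1,\ldots,k\}$ collect the positions of the existential quantifiers. A routine induction on~$\psi$ establishes a locality lemma: whether $(T,\Pi) \models \psi$ holds depends only on the traces in the image of~$\Pi$, not on the ambient set~$T$. This is immediate from the semantic clauses for $a_\pi$, $\neg$, $\vee$, $\X$ and $\U$, which never mention~$T$. Using $T \models \phi$ and choice, for each $i \in I_\exists$ I would pick a Skolem function $f_i \colon T^{u_i} \to T$, where $u_i$ is the number of universal quantifiers preceding~$\pi_i$, that provides a witness for~$\pi_i$ given any universal choices made at positions $1,\ldots,i-1$.

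Starting from $T_0 = \set{t_0}$ for some fixed $t_0 \in T$ (or $T_0 = \emptyset$ if $T$ itself is empty, in which case we are already done), iteratively set $T_{n+1} = T_n \cup \bigcup_{i \in I_\exists} f_i(T_n^{u_i})$. Each $T_n$ remains at most countable because only finitely many Skolem functions of finite arity are applied at each step, hence so does $T' = \bigcup_{n \in \nats} T_n$. By the locality lemma and the closure property of~$T'$, we obtain $T' \models \phi$: every tuple of universal choices from $T' \subseteq T$ is extended by the~$f_i$'s to existential witnesses in~$T'$, and the body~$\psi$ evaluates the same under the resulting assignment whether interpreted over~$T$ or over~$T'$.

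I do not anticipate a genuine obstacle; the construction is just L\"owenheim--Skolem specialized to HyperLTL. The only content beyond standard bookkeeping is the locality lemma, which is trivial. The finiteness of the prenex prefix is what ensures that closure stabilizes at stage~$\omega$ and yields a countable model rather than something larger.
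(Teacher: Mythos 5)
Your proposal is correct and follows essentially the same route as the paper: Skolemize the existential quantifiers, observe that satisfaction of the quantifier-free body depends only on the trace assignment and not on the ambient set, and close a single seed trace under the finitely many Skolem functions to obtain a countable submodel. The only (harmless) differences are that you handle an arbitrary prenex prefix and the empty-model edge case explicitly, whereas the paper normalizes to an alternating $\forall\exists$ prefix w.l.o.g.
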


\begin{proof}
Let $\phi$ be a satisfiable \hyltl sentence and let $T$ be a model. If $T$ is countable, then we are done. Thus, assume $T$ is uncountable and thus in particular non-empty. Furthermore, we assume w.l.o.g.\ $\phi = \forall \pi_0.\ \exists \pi_0'. \cdots \forall \pi_k.\ \exists \pi_k'.\ \psi $ with quantifier-free $\psi$. 

As $T$ is a model of $\phi$, there is a Skolem function $f_i \colon T^{i} \rightarrow T$ for every $i \le k $ satisfying the following property: $(T, \Pi) \models \psi$ for every trace assignment~$\Pi$ that maps each $\pi_i$ to some arbitrary~$t_i \in T$ and every $\pi_i'$ to $f_i(t_0, \ldots, t_i)$. Note that the relation~$(T, \Pi) \models \psi$ does only depend on $\Pi$ and $\psi$, but not on $T$, as $\psi$ is quantifier-free.

Given a subset $S \subseteq T$ and a Skolem function~$f_i$ we define 
\[f_i(S) = \set{ f_i(t_0, \ldots , t_{i}) \mid t_0, \ldots , t_{i} \in S }.\]
Now, fix some $t \in T$. Define $S_{0} = \set{ t }$ and $ S_{ n+1 } = S_n \cup \bigcup_{i=0}^k f_i(S_n)$ for every $n$, and $S = \bigcup_{n \ge 0}S_n$. The limit stage~$S$ is closed under applying the Skolem functions, i.e., if $t_0, \ldots, t_i \in S$, then $f_i(t_0, \ldots, t_i) \in S$. Also, every stage~$S_n$ is finite by a straightforward induction, hence $S$ is countable. We conclude the proof by showing that $S$ is a model of $\phi$. 

Every trace assignment~$\Pi$ mapping $\pi_i$ to some $t_i \in S$ and every $\pi_i'$ to $f_i(t_0, \ldots, t_i) \in S $ satisfies $(T, \Pi) \models \psi$, as argued above. Also, as argued above, this is independent of $T$ due to $\psi$ being quantifier-free. Hence, we obtain $(S, \Pi) \models \psi$. Finally, a simple induction over the quantifier prefix shows $(S, \Pi_\emptyset) \models \phi$, i.e., $S$ is indeed a model of $\phi$.
\end{proof}


\subsection{No Regular Models}
\label{subsec_regmodels}

The construction presented in the proof of Theorem~\ref{thm_finitenotenough}, which pushes a single occurrence of the proposition~$a$ through the traces to enforce the set $\set{ \emptyset^n \cdot \set{a} \cdot \emptyset^\omega \mid n \ge 0 }$ is reused to prove the main result of this subsection. We combine this construction with an inductive swapping construction to show that \hyltl sentences do not necessarily have $\omega$-regular models. To illustrate the swapping, consider the following finite traces:
\begin{align*}
&t_0 = \set{a} \cdot \emptyset \cdot \set{a} \cdot \emptyset \cdot \set{a} \cdot \emptyset \qquad\qquad  t_2 = \set{a} \cdot \set{a} \cdot \emptyset \cdot \set{a} \cdot \emptyset \cdot \emptyset \qquad\qquad\\
&t_1 = \set{a} \cdot \set{a} \cdot \emptyset  \cdot \emptyset \cdot \set{a} \cdot \emptyset \qquad \qquad t_3 = \set{a} \cdot \set{a} \cdot \set{a} \cdot \emptyset \cdot \emptyset \cdot \emptyset
\end{align*}
The trace $t_1$ is obtained from $t_0$ by swapping the first occurrence of $\emptyset$ one position to the right (a swap may only occur between adjacent positions, one where $a$ holds and one where it does not). Furthermore, with two more swaps, one turns $t_1$ into $t_2$ and $t_2$ into $t_3$.

Our following proof is based on the following three observations: (1) In an alternating sequence of even length such as $t_1$, the number of positions where $a$ holds and where $a$ does not hold is equal. Such a sequence is expressible in (Hyper)LTL. (2) A swap does not change this equality and can be formalized in \hyltl. (3) Thus, if all occurrences of $\set{a}$ are swapped to the beginning, then the trace has the form $\set{a}^n \cdot \emptyset^n$ for some $n$. Hence, if we start with all alternating sequences as in $t_0$, then we end up with the non-regular language~$\set{\set{a}^n \cdot \emptyset^n \mid n>0}$. 

\begin{theorem}
\label{thm_regularnotenough}
There is a satisfiable \hyltl sentence that is not satisfied by any $\omega$-regular set of traces.
\end{theorem}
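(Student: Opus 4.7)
The plan is to make the authors' outline precise by constructing a \hyltl sentence $\phi$ over $\ap = \set{a,b}$ in which $b$ acts as an endmarker, forcing every trace of interest into the shape $w\cdot\set{b}\cdot\emptyset^\omega$ with $w\in(\set{\set{a},\emptyset})^*$. The sentence will be designed so that every model $T$ must contain $\set{a}^n\emptyset^n\set{b}\emptyset^\omega$ for each $n\ge 1$ and must exclude every other trace of the form $\set{a}^m\emptyset^k\set{b}\emptyset^\omega$. Intersecting $T$ with the $\omega$-regular language $L = \set{a}^*\emptyset^*\set{b}\emptyset^\omega$ then isolates exactly $\set{\set{a}^n\emptyset^n\set{b}\emptyset^\omega\mid n\ge 1}$, whose finite-prefix language is the non-regular $\set{a^n\emptyset^n\mid n\ge 1}$; since $\omega$-regular sets are closed under intersection with $\omega$-regular ones, $T$ itself cannot be $\omega$-regular.

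The formula $\phi$ will be the conjunction of five parts. A \emph{shape} conjunct demands exactly one occurrence of $b$ on each trace, with $a$'s only strictly before $b$ and $\emptyset$ everywhere after. A \emph{base} conjunct asserts that the alternating trace $\set{a}\emptyset\set{b}\emptyset^\omega$ is in the model. A \emph{push} conjunct says that for every alternating trace $\pi$ there is an alternating $\pi'$ with $b$-position two steps later, where being alternating is expressed by the \ltl formula $\mathrm{alt}(\pi)\equiv a_\pi\wedge\big((a_\pi\leftrightarrow\X\neg a_\pi)\wedge\neg b_\pi\wedge\X\neg b_\pi\big)\U(\neg a_\pi\wedge\X b_\pi)$. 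A \emph{forward-swap} conjunct $\forall\pi.\,\exists\pi'.\,(\mathrm{can}(\pi)\vee\mathrm{sw}(\pi,\pi'))$ asserts that each non-canonical trace has a swap-successor in $T$, where $\mathrm{can}(\pi)\equiv\G(a_\pi\vee\X\neg a_\pi)$ characterises canonical traces and $\mathrm{sw}(\pi,\pi')$ is the standard two-variable \hyltl formula built from $\U$, $\X$, $\G$ stating that $\pi$ and $\pi'$ agree outside a single adjacent pair of positions where $\pi$ has $\emptyset\set{a}$ and $\pi'$ has $\set{a}\emptyset$. Finally, a \emph{reachability} conjunct $\forall\pi.\,\exists\pi'.\,(\mathrm{alt}(\pi)\vee\mathrm{sw}(\pi',\pi))$ demands that every non-alternating trace have a swap-predecessor in $T$.

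Two well-foundedness arguments then finish the proof. The base and push conjuncts yield $(\set{a}\emptyset)^n\set{b}\emptyset^\omega\in T$ for every $n\ge 1$ by induction. Iterating the forward-swap conjunct on this trace produces a chain of traces in $T$ along which the weight $w(\pi)$, defined as the sum of positions at which $a$ holds on $\pi$, strictly decreases; since $w$ is bounded below by $0$, the chain terminates at a canonical trace, and swap-invariance of the $a$-count and the $b$-position pins the terminus to $\set{a}^n\emptyset^n\set{b}\emptyset^\omega$. Dually, the reachability conjunct forces every $\pi\in T$ to sit at the top of a finite backward chain of swap-predecessors ending at an alternating trace; each backward swap strictly increases $w$, and $w$ is bounded above since all $a$'s must remain strictly before the $b$, so termination is automatic. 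Because swap preserves both the $a$-count and the $b$-position, every $\pi\in T$ inherits the parameters $(n,2n)$ of its alternating ancestor, so any $\pi\in T$ of the form $\set{a}^m\emptyset^k\set{b}\emptyset^\omega$ must satisfy $m=k$.

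Satisfiability will be witnessed by the explicit set $T^\star = \set{w\cdot\set{b}\cdot\emptyset^\omega\mid w\text{ is forward-swap-reachable from some }(\set{a}\emptyset)^n}$, for which all five conjuncts hold by construction. I anticipate the principal obstacle to be less the strategy itself, which is essentially the one sketched in the paragraph preceding the theorem, than the careful \hyltl encoding of $\mathrm{alt}$, $\mathrm{can}$ and of the two-variable swap relation $\mathrm{sw}$, together with formalising the two weight-based termination arguments that pin down the canonical trace and force balance.
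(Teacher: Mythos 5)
Your construction is correct and follows the same overall strategy as the paper's proof (alternating traces, a swap relation that preserves the number of $a$'s and the end position, and an intersection with $\set{a}^*\cdot\emptyset^*\cdot{}$endmarker to expose the non-regular language $\set{\set{a}^n\emptyset^n \mid n>0}$), but you resolve the one genuinely delicate point by a different mechanism. The paper's swap formula $\phiswap$ is \emph{undirected} in the $a$-values, so nothing intrinsic prevents a model from satisfying the swap obligations by a cycle of junk traces that never connects to an alternating trace; the paper therefore introduces an explicit syntactic rank (the length of a $b$-prefix on type~$2$ traces) that must strictly decrease with each swap, which costs an extra proposition and extra conjuncts but makes termination immediate. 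You instead make the swap relation \emph{directed} ($\pi$ has $\emptyset\set{a}$ where $\pi'$ has $\set{a}\emptyset$) and use the implicit weight $w(\pi)=\sum\set{i \mid a\in\pi(i)}$ as the termination measure: forward chains strictly decrease $w\ge 0$ and so end at the unique canonical trace $\set{a}^n\emptyset^n$, while backward chains strictly increase $w$, which is bounded above because the $b$-position is fixed along a chain, and so must end at an alternating ancestor, pinning the balance of every trace in the model. This buys a smaller alphabet and formula at the price of a slightly more arithmetic meta-level argument; both are sound. Two points you should make explicit when writing out the encoding: $\mathrm{sw}(\pi,\pi')$ must also enforce $\G(b_\pi\leftrightarrow b_{\pi'})$ (otherwise neither the upper bound on $w$ nor the identification of the chain's terminus goes through), and the backward-chain argument silently excludes from any model every trace whose weight exceeds that of the alternating trace with the same parameters --- which is harmless, and consistent with your witness $T^\star$, but worth a sentence so the reader does not suspect unsatisfiability.
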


\begin{proof}
Consider the conjunction~$\phi$ of the formulas~$\phi_i$, $i \in \set{1, \ldots, 8}$ over~$\ap = \set{a,b,1,2,\dagger}$.

\begin{itemize}
	\item $\phi_1 = \forall \pi.\ (1_\pi \oplus 2_\pi ) \wedge \neg \dagger_\pi  \wedge \neg \dagger_\pi  \U \G(\dagger_\pi  \wedge \neg a_\pi)$.
\end{itemize}
Every trace from a set of traces satisfying $\phi_1$ either satisfies $1$ or $2$ at the first position. Consequently, we speak of traces of type~$i$ for $i \in \set{1,2}$. Also, on every such trace the truth value of $\dagger$ changes exactly once, from false to true, after being false at least at the first position. In the following, we are only interested in the unique maximal prefix of a trace where $\dagger$ does not hold, which we call the \emph{window} of the trace. Note that $a$ may only hold in the window of a trace. Considering windows essentially turns infinite traces into finite ones.

The balance~$\bal(t)$ of a trace~$t$ is the absolute value of the difference between the number of window positions where $a$ holds and the number of those where $a$ does not hold, i.e.,
\[\bal(t) = |\, |\set{n \mid a \in t(n) \text{ and } \dagger\notin t(n) }| - |\set{n \mid a \notin t(n) \text{ and } \dagger\notin t(n) }|\, |.\]

\begin{itemize}
	\item $\phi_2 = \forall \pi.\ 1_\pi  \rightarrow (a_\pi \wedge \G( a_\pi \rightarrow \X \neg a_\pi \wedge \X\neg \dagger_\pi \wedge \X\X(a_\pi \vee \dagger_\pi) ))$
	
	\item $\phi_3 = \exists \pi.\ 1_\pi \wedge a_\pi \wedge \X\X \dagger_\pi$
	
	\item $\phi_4 = \forall \pi.\ \exists \pi'.\ 1_\pi \rightarrow (1_{\pi'} \wedge \F( \neg\dagger_{\pi} \wedge \X\dagger_\pi \wedge \X\X\neg\dagger_{\pi'} \wedge \X\X\X\dagger_{\pi'} )) $
\end{itemize}
If $\phi_1 \wedge \cdots \wedge \phi_4$ is satisfied by a set of traces, then the projection to $\set{a}$ of the window of every type~$1$ trace has the form $(\set{a}\cdot\emptyset)^n$ for some $n>0$, due to $\phi_2$. In particular, every type~$1$ trace has balance zero. Furthermore, due to $\phi_3$ and $\phi_4$, there is a trace with such a window for every $n>0$. 

\begin{itemize}
	\item $\phi_5 = \forall \pi.\ 2_\pi \rightarrow b_\pi \wedge b_\pi\U\G\neg b_\pi$
\end{itemize}
Finally, $\phi_5$ requires every type~$2$ trace to have a prefix where $b$ holds true, after which it never holds true again. The length of this prefix is the \emph{rank} of the trace, which is finite.

The next formula implements the swapping process. Each swap has to decrease the rank until a type~$1$ trace is reached. This rules out models satisfying the formulas by cyclic swaps.

\begin{itemize}
	\item $\phi_6 = \forall \pi.\ \exists \pi'.\ 2_\pi \rightarrow  (\F (\dagger_{\pi} \wedge \dagger_{\pi'} \wedge \X \neg \dagger_{\pi} \wedge \X\neg\dagger_{\pi'})) \wedge \phiswap(\pi,\pi') \wedge [$ 
	
	$\phantom{\phi_6 = \forall \pi.\ \exists \pi'.\ } (1_{\pi'} \wedge b_{\pi} \wedge \X \neg b_{\pi})\, \vee $
	
	$\phantom{\phi_6 = \forall \pi.\ \exists \pi'.\ } (2_{\pi'} \wedge \F(b_{\pi'} \wedge \X \neg b_{\pi'} \wedge \X b_{\pi} \wedge \X\X \neg b_{\pi})) ]$
	
\end{itemize}
where
\[\phiswap(\pi,\pi') = (a_{\pi}\leftrightarrow a_{\pi'})\U ((a_{\pi} \oplus \X a_{\pi}) \wedge (a_{\pi'} \oplus\X a_{\pi'}) \wedge (a_{\pi} \oplus a_{\pi'}) \wedge \X\X \G( a_{\pi} \leftrightarrow a_{\pi'})) . \]
Intuitively, this formula requires for every trace~$t$ of type~$2$ the existence of a trace~$t'$ of the same window length and where the difference in the truth values of $a$ in $t$ and $t'$ is only a single swap at adjacent positions (first line). Furthermore, if $t$ has rank one, then $t'$ has to be of type~$1$ (line two); otherwise, if $t$ has rank~$r > 1$, then $t'$ has to be of type~$2$ and has to have rank~$r-1$ (line three). Thus, the rank is an upper bound on the number of swaps that can be executed before a trace of type~$1$ is reached.

An induction over the rank of type~$2$ traces shows that every such trace has balance zero, as a swap as formalized by $\phiswap$ does not change the balance. 

\begin{itemize}
	\item $\phi_7 = \exists \pi.\ 2_\pi \wedge a_\pi$
	\item $\phi_8 = \forall \pi.\ \exists \pi'.\ 2_\pi \rightarrow (2_{\pi'} \wedge (a_\pi \wedge a_{\pi'}) \U(\G\neg a_\pi \wedge a_{\pi'} \wedge \X \G\neg a_{\pi'} ))$
\end{itemize}
The last two formulas imply for every $n>0$ the existence of a trace of type~$2$ which has a prefix where $a$ holds true at exactly the first $n$ positions, after which it never holds true again. Due to the balance of type~$2$ traces being zero (assuming all previous formulas are satisfied), the projection to $\set{a}$ of the window of such a trace has the form~$\set{a}^n \cdot \emptyset^n$.

Now, towards a contradiction, assume that $T \models \phi$ for some $\omega$-regular $T$. It follows from the observations made above that projecting $T$ to $\set{a,\dagger}$ and intersecting it with the $\omega$-regular language~$\set{a}^*\cdot\emptyset^*\cdot\set{\dagger}^\omega$ results in the language~$\set{\set{a}^n\cdot\emptyset^n\cdot\set{\dagger}^\omega \mid n >0}$, which is not $\omega$-regular. This yields the desired contradiction. 

To conclude, it suffices to remark that $\phi$ is satisfied by taking the union of the set of all required type~$1$ traces and of the set of all type~$2$ traces with finite window length, balance zero, and with rank equal to the number of swaps necessary to reach a type~$1$ trace. 
\end{proof}

Note that this result can be strengthened by starting with type~$1$ traces of the form $(\emptyset \cdot \set{a} \cdot \set{a'} \cdot \set{a,a'})^+ \set{\dagger}^\omega$ for some fresh proposition~$a'$ and then modify the swap operation to obtain sequences of the form $\emptyset^n \cdot \set{a}^n \cdot \set{a'}^n \cdot \set{a,a'}^n\set{\dagger}^\omega$. These form, when ranging over all $n$, a non-$\omega$-contextfree language (see \cite{DBLP:journals/jcss/CohenG77} for a formal definition of these languages). Thus, not every \hyltl sentence has an $\omega$-contextfree model.

\begin{theorem}
\label{thm_contextfreenotenough}
There is a satisfiable \hyltl sentence that is not satisfied by any $\omega$-contextfree set of traces. 
\end{theorem}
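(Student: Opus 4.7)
The plan is to adapt the proof of Theorem~\ref{thm_regularnotenough} along the lines sketched in the note preceding the theorem statement. I would extend the alphabet with a fresh proposition~$a'$ and strengthen the type~$1$ shape formula~$\phi_2$ (together with the witness formulas~$\phi_3,\phi_4$) so that the window of every type~$1$ trace, projected to $\set{a,a'}$, has the form $(\emptyset \cdot \set{a} \cdot \set{a'} \cdot \set{a,a'})^n$ for some $n>0$, and so that at least one such trace exists for every $n>0$. The formulas $\phi_1$ and $\phi_5$ remain essentially unchanged.

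Next, I would rewrite the swap predicate~$\phiswap(\pi,\pi')$ so that $\pi$ and $\pi'$ agree on $\set{a,a'}$ at every window position except at two adjacent positions~$j,j+1$, where the pair of $\set{a,a'}$-labels on $\pi$ is the transposition of the pair on $\pi'$ and the two labels are distinct. This is expressible in \hyltl in complete analogy to the original one-proposition swap: one synchronously tracks $a_\pi \leftrightarrow a_{\pi'}$ and $a'_\pi \leftrightarrow a'_{\pi'}$, forces a single disagreement window of length two, and requires global agreement both before and after. The rank-based formula~$\phi_6$ then stays structurally the same, with the new $\phiswap$ plugged in, and continues to force every type~$2$ trace to be reducible to a type~$1$ trace by a bounded sequence of such swaps. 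The crucial invariant is that each individual swap preserves the multiset of $\set{a,a'}$-labels from $\set{\emptyset,\set{a},\set{a'},\set{a,a'}}$ occurring in the window, so by induction on rank every type~$2$ trace carries the same label multiset as its type~$1$ descendant, namely $n$ copies of each of the four labels for some $n>0$. Formulas in the spirit of $\phi_7,\phi_8$ then force, for every $n>0$, the existence of a type~$2$ trace whose window is already sorted, i.e., of the form $\emptyset^{n_1}\cdot\set{a}^{n_2}\cdot\set{a'}^{n_3}\cdot\set{a,a'}^{n_4}$ with each $n_i \ge 1$; the multiset invariant then forces $n_1=n_2=n_3=n_4=n$. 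Projecting any purported $\omega$-contextfree model to $\set{a,a',\dagger}$ and intersecting with the $\omega$-regular language $\emptyset^*\cdot\set{a}^*\cdot\set{a'}^*\cdot\set{a,a'}^*\cdot\set{\dagger}^\omega$ therefore yields exactly $\set{\emptyset^n\cdot\set{a}^n\cdot\set{a'}^n\cdot\set{a,a'}^n\cdot\set{\dagger}^\omega \mid n>0}$, which is not $\omega$-contextfree because $\set{w^n x^n y^n z^n \mid n > 0}$ fails the context-free pumping lemma. This yields the desired contradiction, and satisfiability of the overall sentence is witnessed essentially as in the previous proof.

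The main obstacle I expect is the construction of the analogues of $\phi_7,\phi_8$: with four label classes one needs an inductive scheme that grows each of the four sorted blocks independently, so that the equality $n_1=\cdots=n_4$ arises \emph{solely} from the multiset invariant rather than accidentally from the shape formula itself. A secondary technical worry is carefully verifying that the enriched $\phiswap$ really forbids any spurious flips outside the designated two-position window; this was already the subtle point in the binary case and becomes somewhat more bookkeeping-heavy in the presence of two propositions.
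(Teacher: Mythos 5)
Your proposal follows exactly the route the paper itself sketches in the remark preceding the theorem: enrich the alphabet with a fresh proposition~$a'$, start from type~$1$ windows of shape $(\emptyset\cdot\set{a}\cdot\set{a'}\cdot\set{a,a'})^n$, generalize $\phiswap$ to a transposition of adjacent distinct $\set{a,a'}$-labels (which preserves the label multiset, replacing the scalar balance invariant), and force sorted type~$2$ witnesses whose projection yields the non-$\omega$-contextfree language $\set{\emptyset^n\set{a}^n\set{a'}^n\set{a,a'}^n\set{\dagger}^\omega\mid n>0}$. This is the same approach as the paper's, and your identification of the delicate points (the four-block analogues of $\phi_7,\phi_8$ and the tightness of the enriched swap) is apt.
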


It is an interesting question to find a non-trivial class of languages that is rich enough for every satisfiable \hyltl sentence to be satisfied by a model from this class. 


\subsection{No Periodic Models}
\label{subsec_models}

Next, we extend the techniques developed in the previous two subsections to show our final result on the complexity of \hyltl models: although every \ltl formula has an ultimately periodic model, one can construct a \hyltl sentence without ultimately periodic models. 

\begin{theorem}
\label{thm_noultperiodic}
There is a satisfiable \hyltl sentence that is not satisfied by any set of ultimately periodic traces.
\end{theorem}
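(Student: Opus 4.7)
The plan is to extend the counting technique of Theorem~\ref{thm_finitenotenough} by coupling the enforced family of ``ruler'' traces to a single distinguished ``master'' trace, in the spirit of $\phiswap$ from the proof of Theorem~\ref{thm_regularnotenough}, so as to force every model to contain at least one non-ultimately-periodic trace.

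More precisely, I would construct a sentence~$\phi$ over an extended set of atomic propositions, using fresh marker propositions to separate two roles of traces. First, formulas closely patterned on those from Theorem~\ref{thm_finitenotenough} enforce a family of ruler traces~$c_n = \emptyset^n \cdot \set{a} \cdot \emptyset^\omega$, one for every~$n \ge 0$. Second, $\phi$ requires the existence of at least one \emph{master} trace~$\pi^*$ on which a fresh proposition~$b$ holds infinitely often, expressible via~$\G\F b_{\pi^*}$.

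The crux is a coupling formula in the spirit of~$\phiswap$, requiring for every ruler~$c_n$ the existence of two consecutive $b$-positions on~$\pi^*$ whose distance exceeds~$n$. Synchronously walking along~$\pi^*$ and~$c_n$, nested $\U$-operators match the number of~$\neg b$-steps traversed on~$\pi^*$ against the time point at which the lone~$a$ on~$c_n$ occurs; ranging this over all rulers forces the consecutive-gap sequence of~$b$'s on~$\pi^*$ to be unbounded. Together with~$\G\F b_{\pi^*}$, unboundedness of consecutive gaps rules out ultimate periodicity of~$\pi^*$: if~$\pi^*$ had period~$p$ after position~$s$, then beyond~$s$ every two consecutive $b$-positions would be within distance~$p$, contradicting a coupling witness for any~$n > p$. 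Satisfiability is witnessed by taking the union of the ruler traces with a single master trace whose $b$-positions are exactly the triangular numbers $0, 1, 3, 6, 10, \ldots$: the consecutive distances $1, 2, 3, \ldots$ cover every positive integer, so all coupling requirements are met, and this trace is manifestly not ultimately periodic.

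The main obstacle is the precise formulation of the coupling formula: measuring the distance between two consecutive $b$-positions on~$\pi^*$ by means of the $a$-position of a ruler, under \hyltl's synchronous semantics, requires delicate alignment through nested temporal connectives and a handful of auxiliary propositions, in the same style as the already intricate formulas~$\phi_6$ and~$\phiswap$ in the proof of Theorem~\ref{thm_regularnotenough}.
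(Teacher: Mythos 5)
Your high-level reduction is sound: under $\G\F b_{\pi^*}$, unboundedness of the consecutive $b$-gaps does rule out ultimate periodicity of the master trace, and since the master is existentially quantified, every model would have to contain a non-ultimately-periodic trace. The problem is that the step you flag as ``the main obstacle'' is not a matter of delicate but routine formula engineering --- it is the entire difficulty of the theorem, and your setup does not provide the means to overcome it. Your coupling formula must assert, for a ruler~$c_n = \emptyset^n\cdot\set{a}\cdot\emptyset^\omega$ and an \emph{arbitrary} position~$m$ on~$\pi^*$, that $b$ fails throughout $m+1,\ldots,m+n$. Under \hyltl's synchronous semantics all quantified traces advance in lockstep, so when the evaluation has reached position~$m$ on~$\pi^*$ it has also reached position~$m$ on~$c_n$, whose single~$a$ sits at \emph{absolute} position~$n$, i.e., at offset~$n-m$ from the current point (and behind it once $m>n$). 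A ruler anchored at time~$0$ therefore only lets you measure windows that start at time~$0$; it cannot delimit a length-$n$ window starting at an unknown, typically large,~$m$. Nested $\U$-operators cannot substitute for this, since \hyltl has no mechanism for counting an unbounded number of steps relative to a dynamically chosen origin.

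This is precisely why the paper's proof spends most of its effort constructing the type~$2$ traces: for each characteristic~$c$ it forces the model to contain \emph{every suffix} of $(\set{b}^c\cdot\emptyset^c)^\omega$, so that for any position~$n$ there is an auxiliary trace with a block boundary at~$n$, and the position~$n+c$ can be reached by simply walking to the end of that block. Enforcing that this full family of shifted periodic rulers is present (the five properties and the left-shift argument) is the real content of the proof. To repair your argument you would need an analogous device: either adopt the paper's type~$2$ traces, or enforce some other family of traces carrying synchronized markers at both endpoints of each window you want to measure. Without that, the coupling formula you rely on is not expressible, and the construction does not go through. (Your satisfiability witness with triangular-number gaps is fine, but it certifies the intended \emph{property}, not the existence of a \hyltl sentence defining it.)
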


\begin{proof}
A trace~$t$ is \emph{not} ultimately periodic, if for every $s,p>0$ there is an $n \ge s$ with $t(n) \neq t(n +p)$. In the following, we construct auxiliary traces that allow us to express this property in \hyltl. The main difficulty is to construct traces of the form $(\set{b}^p \cdot \emptyset^p)^\omega$ for every $p$, to implement the quantification of the period length~$p$. 

We construct a sentence~$\phi$ over $\ap = \set{a, b , 1, 2, \$}$ with the desired properties, which is a conjunction of several subformulas. The first conjunct requires every trace in a model of $\phi$ to have exactly one occurrence of the proposition~$a$. If it holds at position~$n$, then we refer to $n+1$ as the \emph{characteristic} of the trace (recall that a trace starts at position~$0$).

As in the proof of Theorem~\ref{thm_regularnotenough}, we have two special types of traces in models of $\phi$, which are identified by either $1$ or $2$ holding true at the first position of every trace, but there might be other traces as well.
Type~$1$ traces are of the form $\emptyset^c\cdot\set{a}\cdot\emptyset^\omega$ for $c \ge 0$. As in the proof of Theorem~\ref{thm_finitenotenough}, one can construct a conjunct that requires the models of $\phi$ to contain a type~$1$ trace for every such $c$, but no other traces of type~$1$. 

The projection to $\set{b}$ of a trace $t$ of type~$2$ is a suffix of $(\set{b}^c\cdot \emptyset^c)^\omega $, where $c$ is the characteristic of $t$. We claim that one can construct a conjunct of $\phi$ that requires all models of $\phi$ to contain all these type~$2$ traces, i.e., all possible suffixes for every $c>0$. This is achieved by formalizing the following properties in \hyltl:

\begin{enumerate}
	
	\item Every type~$2$ trace has infinitely many positions where $b$ holds and infinitely many positions where $b$ does not hold. A block of such a trace is a maximal infix whose positions coincide on their truth values of $b$, i.e., either $b$ holds at every position of the infix, but not at the last one before the infix (if it exists) and not at the first position after the infix or $b$ does not hold at every position of the infix, but at the last one before it (if it exists) and at the first position after it.
	
	\item For every type~$1$ trace there is at least one type~$2$ trace of the same characteristic.
	
	\item The length of the first block of every type~$2$ trace is not larger than its characteristic.
	
	\item If a block ends at the unique position of a type~$2$ trace where its $a$ holds, then it has to be the first block. 

	\item For every type~$2$ trace there is another one of the same characteristic that is obtained by shifting the truth values of $b$ one position to the left. 
	
\end{enumerate}
	
Assume a set~$T$ of traces satisfies all these properties and assume there is a type~$2$ trace~$t \in T$ whose projection to $\set{b}$ is not a suffix of $(\set{b}^c\cdot \emptyset^c)^\omega $, where $c$ is the characteristic of $t$. The length of its first block is bounded by $c$, due to the third property. Thus, there has to be a non-first block whose length~$\ell$ is not equal to $c$. If $\ell > c$, we can use the fifth property to shift this block to the left until we obtain a type~$2$ trace of characteristic~$c$ in $T$ whose first block has the same length~$\ell$. This trace violates the third property. If $\ell < c$, then we can again shift this block to the left until we obtain a trace in $T$ of characteristic $c$ that has a block of length~$\ell$ that ends at the unique position where $a$ holds. Due to $\ell < c$, this cannot be the first block, i.e., we have derived a contradiction to the fourth property. 

On the other hand, for every $c>0$, there is a some type~$2$ trace of characteristic~$c$ in $T$. As shown above, its projection to $\set{b}$ is a suffix of $(\set{b}^c\cdot \emptyset^c)^\omega $. Thus, applying the left-shift operation $2c-1$ times yields all possible suffixes of $(\set{b}^c\cdot \emptyset^c)^\omega $. Thus, $T$ does indeed contain all possible type~$2$ traces, if it satisfies the formulas described above.

Recall that we have to express the following property: there is a trace~$t$ such that for every $s,p>0$ there is an $n \ge s$ with $t(n) \neq t(n +p)$. To this end, we first existentially quantify a trace~$\pi$ (the supposedly non-ultimately periodic one). Then, we universally quantify two type~$1$ traces~$\pi_s$ and $\pi_p$ (thereby fixing $s$ and $p$ as the characteristics of $\pi_s$ and $\pi_p$). Thus, it remains to state that $\pi$ has two positions~$n$ and $n'$ satisfying $s \le n < n' = n+p$ such that the truth value of $\$$ differs at these positions. To this end, we need another trace~$\pi_p'$ of the same characteristic~$p$ as $\pi_p$ so that a block of $\pi_p'$ starts at position $n$, which allows to determine $n' = n+p$ by just advancing to the end of the block starting at $n$.  

Formally, consider the following statement: there is a trace~$\pi$ such that for all type~$1$ traces~$\pi_s$ and $\pi_p$ (here, we quantify over $s$ and $p$) there is a type~$2$ trace $\pi_p'$ that has the same characteristic as $\pi_p$ such that the following is true: there is a position~$n$ no earlier than the one where $a$ holds in $\pi_s$ such that
\begin{itemize}
	\item the truth value of $b$ in $\pi_p'$ differs at positions~$n-1$ and $n$ (i.e., a block begins at $n$), and
	\item the atomic proposition~$\$$ holds at $n$ in $\pi$ and not at $n'$ in $\pi$ or vice versa, where $n' >n$ is the smallest position such that the truth value of $b$ in $\pi_p'$ differs at $n'-1$ and $n'$ (i.e., the next block begins at position~$n'$), which implies $n' = n + p$. 
\end{itemize}
The formalization of this statement in \hyltl is the final conjunct of $\phi$. Hence, $\phi$ has no models that contain an ultimately periodic trace. 

Finally, $\phi$ is satisfied by all models that contain all possible type~$1$ and all possible type~$2$ traces as well as at least one trace that is not ultimately periodic when projected to $\set{\$}$.\end{proof}

Note that the type~$1$ and type~$2$ traces above are ultimately periodic, i.e., although we have formalized the existence of a single non-ultimately periodic trace, the model always has ultimately periodic ones as well. By slightly extending the construction, one can even construct a satisfiable sentence whose models contain not a single ultimately periodic trace. To this end, one requires that every trace (in particular the type~$1$ and type~$2$ traces) is non-ultimately periodic, witnessed by the proposition~$\$ $ as above. 

\begin{theorem}
\label{thm_noultperiodic_strong}
There is a satisfiable \hyltl sentence that is not satisfied by any set of traces that contains an ultimately periodic trace.
\end{theorem}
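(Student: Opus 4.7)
The plan is to reuse the sentence~$\phi$ constructed in the proof of Theorem~\ref{thm_noultperiodic} and strengthen its final conjunct by replacing the outermost existential quantifier over the supposedly non-ultimately periodic trace with a universal one; call the resulting sentence~$\phi'$. In~$\phi$, the propositions~$1, 2, a, b$ shape the mandatory type~$1$ and type~$2$ auxiliary traces, while the separate proposition~$\$$ witnesses non-ultimate periodicity for a single existentially quantified trace. Crucially, the conjuncts constraining type~$1$ and type~$2$ traces refer only to $\set{1, 2, a, b}$, so the $\set{\$}$-projection of every trace is independent of the type structure. After the switch to a universal quantifier, every trace in any model of $\phi'$ must have a non-ultimately periodic $\set{\$}$-projection, and is hence itself non-ultimately periodic.

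For satisfiability, I would build a model by modifying the one from Theorem~\ref{thm_noultperiodic}: retain every mandatory type~$1$ and type~$2$ trace (preserving its $\set{1, 2, a, b}$-pattern) but equip each such trace with an arbitrarily chosen non-ultimately periodic $\set{\$}$-pattern; several copies of the same $\set{1, 2, a, b}$-pattern, each carrying a distinct $\set{\$}$-pattern, may be included if needed to supply witnesses. Since all original conjuncts other than the final one are oblivious to~$\$$, they remain satisfied. The strengthened final conjunct is also satisfied: for any universally quantified trace~$\pi$, a non-ultimate-periodicity witness position~$n$ is found using the non-ultimately periodic $\set{\$}$-pattern of~$\pi$, while the existential auxiliary~$\pi_p'$ is selected exactly as in the original proof, since its choice depends only on the $\set{1, 2, a, b}$-pattern of~$\pi_p$ together with~$n$.

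The main obstacle will be to verify (and, where needed, very slightly rephrase) the original conjuncts so that the constraints forcing the existence of the type~$1$ and type~$2$ traces do not accidentally fix the value of~$\$$ on any trace: the original description of type~$1$ traces as being exactly of the form~$\emptyset^c \cdot \set{a} \cdot \emptyset^\omega$ has to be read as referring to the $\set{1, 2, a, b}$-projection only. Once $\$$ is certified to be left entirely free, the construction above yields a model of~$\phi'$, while every model of~$\phi'$ consists solely of non-ultimately periodic traces, establishing the theorem.
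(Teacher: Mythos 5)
Your proposal is correct and matches the paper's own (very brief) argument: the paper likewise obtains the stronger statement by requiring \emph{every} trace---in particular the type~$1$ and type~$2$ traces---to be non-ultimately periodic as witnessed by $\$$, which is exactly your switch from existential to universal quantification of the trace~$\pi$ in the final conjunct. Your additional observation that the auxiliary conjuncts must be read as constraining only the $\set{1,2,a,b}$-projections, leaving $\$$ entirely free on every trace, is precisely the one point that needs checking and is handled correctly.
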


As a final note on the expressiveness of \hyltl we show how to encode the prime numbers. Let type~$1$ and type~$2$ traces be axiomatized as in the proof of Theorem~\ref{thm_noultperiodic}. Recall projecting a type~$2$ trace to $\set{b}$ yields a suffix of $(\set{b}^c\cdot \emptyset^c)^\omega $, where $c>0$ is the trace's characteristic. We say that such a trace is \emph{proper}, if its projection equal to $(\set{b}^c\cdot \emptyset^c)^\omega $. Being proper can be expressed in \hyltl, say by the formula~$\phiproper(\pi)$ with a single free variable, relying on the fact that the only occurrence of $a$ induces the characteristic~$c$. Also, we add a new atomic proposition~$'$ to $\ap$ to encode the prime numbers as follows: the proposition~$'$ holds at the first position of a type~$1$ trace of characteristic~$c$ if, and only if, $c$ is a prime number. 

Now, consider the following formula, which we add as a new conjunct to the axiomatization of type~$1$ and type~$2$ traces:
\begin{align*}
& \forall \pi^1.\ \forall \pi^2.\ (1_{\pi^1} \wedge \ '_{\pi^1} \wedge \phiproper(\pi^2) \rightarrow \neg\psi(\pi^1,\pi^2)) \, \wedge\\
 & \forall\pi^1.\ \exists \pi^2.\ (1_{\pi^1} \wedge \neg\ '_{\pi^1} \rightarrow \phiproper(\pi^2) \wedge \psi(\pi^1,\pi^2))
\end{align*}
Here, the formula~$\psi(\pi^1,\pi^2)$ expresses that the single~$a$ in $\pi^1$ appears at the end of a non-first block in $\pi^2$ and that the characteristic of $\pi^2$ is strictly greater than one. Thus, $\psi(\pi^1, \pi^2)$ holds if, and only if, the characteristic of $\pi^2$ is a non-trivial divisor of the characteristic of $\pi^1$. Thus, the first conjunct expresses that a type~$1$ trace of characteristic~$c>1$ may only have a $'$ at the first position, if $c$ has only trivial divisors, i.e., if $c$ is prime. Similarly, the second conjunct expresses that a type~$1$ trace of characteristic~$c>1$ may only not have a $'$ at the first position, if $c$ has a non-trivial divisor, i.e., if $c$ is not prime. Thus, by additionally hardcoding that $1$ is not a prime, one obtains a formula~$\phi$ such that every model~$T$ of $\phi$ encodes the primes as follows: $c$ is prime if, and only if, there is a type~$1$ trace of characteristic~$c$ in $T$ with $'$ holding true at its first position. 


\section{First-order Logic for Hyperproperties}
\label{sec_fo}

Kamp's seminal theorem~\cite{Kamp68} states that Linear Temporal Logic with the until-operator~$\U$ and its dual past-time operator \myquot{since} is expressively equivalent to first-order logic over the integers with order, \fol for short. Later, Gabbay et al.~\cite{Gabbayetal80} proved that \ltl as introduced here (i.e., exclusively with future-operators) is expressively equivalent to first-order logic over the \emph{natural numbers} with order. More formally, one considers relational structures of the form $(\nats, <, (P_a)_{a\in\ap})$ where $<$ is the natural ordering of $\nats$ and each $P_a$ is a subset of $\nats$. There is a bijection mapping a trace~$t$ over $\ap$ to such a structure~$\model{t}$. Furthermore, \fol is first-order logic\footnote{We assume familiarity with the syntax and semantics of first-order logic. See, e.g.,~\cite{EbbinghausFlumThomas94}, for an introduction to the topic.} over the signature~$\set{<} \cup \set{ P_a \mid a \in \ap}$ with equality.
The result of Gabbay et al.\ follows from the existence of the following effective translations: (1)~For every \ltl formula~$\phi$ there is an \fol sentence $\phi'$ such that for all traces~$t$: $t \models \phi$ if, and only if, $\model{t} \models \phi'$. (2)~For every \fol sentence $\phi$ there is an \ltl formula $\phi'$ such that for all traces~$t$: $\model{t} \models \phi$ if, and only if, $t \models \phi'$. 

In this section, we investigate whether there is a first-order logic that is expressively equivalent to \hyltl. The first decision to take is how to represent a set of traces as a relational structure. The natural approach is to take disjoint copies of the natural numbers, one for each trace and label them accordingly. Positions on these traces can be compared using the order. To be able to compare different traces, we additionally introduce a (commutative) equal-level predicate~$\el$, which relates the same time points on different traces.

Formally, given a set~$T \subseteq (\pow{\ap})^\omega$ of traces over $\ap$, we define the relational structure~$\model{T} = (T \times \nats, <^{\model{T} }, \el^{\model{T} }, (P_a^{\model{T} })_{a \in \ap})$ with
\begin{itemize}
	\item $<^{\model{T} } = \set{((t,n),(t, n')) \mid t \in T \text{ and } n < n' \in \nats }$,
	\item $\el^{\model{T} } = \set{ ((t,n),(t',n)) \mid t,t' \in T \text{ and } n \in \nats}$, and
	\item $P_a^{\model{T} } = \set{ (t,n) \mid a \in t(n) }$.
\end{itemize}

We consider first-order logic over the signature~$\set{<,\el } \cup \set{ P_a \mid a \in \ap}$, i.e., with atomic formulas~$x = y$, $x < y$, $\el(x, y)$, and $P_a(x)$ for $a \in \ap$, and disjunction, conjunction, negation, and existential and universal quantification over elements. We denote this logic by $\fole$. 
We use the shorthand~$x \le y$ for $x < y \vee x = y$ and freely use terms like $x \le y < z$ with the obvious meaning. A sentence is a closed formula, i.e., every occurrence of a variable is in the scope of a quantifier binding this variable. We write $\phi(x_0, \ldots, x_n)$ to denote that the free variables of the formula~$\phi$ are among $x_0, \ldots, x_n$. 

\begin{example}\hfill
\label{example_hyfolsyntacticsugar}
\begin{enumerate}
	\item The formula~$\succ(x,y) = x < y \wedge \neg \exists z.\ x < z  < y$ expresses that $y$ is the direct successor of $x$ on some trace. 
	\item The formula~$\min(x) = \neg \exists y.\ \succ(y,x)$ expresses that $x$ is the first position of a trace. 
\end{enumerate}	
\end{example}

Our first result shows that full \fole is too expressive to be equivalent to \hyltl. To this end, we apply a much stronger result due to Bozzelli et al.~\cite{BozzelliMP15} showing that a certain property expressible in \kltl (\ltl with the epistemic knowledge operator~$\K$~\cite{FaginHMV95}) is not expressible in \hyctlstar, which subsumes \hyltl. 

\begin{theorem}
\label{thm_hyfoltoostrong}
There is an \fole sentence~$\phi$ that has no equivalent \hyltl sentence: For every $\hyltl$ sentence~$\phi'$ there are two sets~$T_0$ and $T_1$ of traces such that
\begin{enumerate}
	\item $\model{T_0} \not\models \phi$ and $\model{T_1} \models \phi$, but
	\item $\phi'$ cannot distinguish $T_0$ and $T_1$, i.e., either both $T_0 \models \phi'$ and $T_1 \models \phi'$ or both $T_0 \not\models \phi'$ and $T_1 \not\models \phi'$.
\end{enumerate}
\end{theorem}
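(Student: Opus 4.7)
The plan is to import an existing inexpressibility result into the \fole setting. By Bozzelli et al.~\cite{BozzelliMP15}, there is a \kltl formula~$\phi_K$ expressing a property that no \hyctlstar sentence captures, in the strong sense that for every \hyctlstar sentence~$\phi'$ one can exhibit sets~$T_0, T_1$ of traces with $T_1 \models \phi_K$, $T_0 \not\models \phi_K$, and $T_0 \models \phi' \Leftrightarrow T_1 \models \phi'$. Since every \hyltl sentence is a \hyctlstar sentence, the same pairs work against any \hyltl sentence. Hence it suffices to translate $\phi_K$ into an \fole sentence~$\phi$ with $T \models \phi_K$ iff $\model{T} \models \phi$ for every set of traces~$T$: the separating pair supplied for a given $\phi'$ then witnesses the theorem.

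The translation proceeds by induction on \kltl subformulas~$\chi$, producing an \fole formula~$\chi^\ast(x)$ with one free position variable such that $\chi$ holds in $T$ at $(t,n)$ iff $\model{T} \models \chi^\ast[(t,n)/x]$. Boolean connectives are handled directly; the \ltl operators $\X$ and $\U$ use the standard Kamp translation, with every quantified position guarded by $y \le x \vee x \le y$ so that it stays on the trace of~$x$ (recall that $<$ in $\model{T}$ only relates positions on a single trace). The only interesting case is the knowledge operator: $\K_A \chi$ holds at $(t,n)$ iff $\chi$ holds at $(t',n)$ for every trace $t' \in T$ whose projection to the observables~$A$ agrees with $t$'s on the positions $0, \dots, n$. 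This is captured by
\[
(\K_A \chi)^\ast(x) \;=\; \forall y.\ \bigl(\el(x,y) \wedge \mathit{obsEq}_A(x,y)\bigr) \rightarrow \chi^\ast(y),
\]
with
\[
\mathit{obsEq}_A(x,y) \;=\; \forall z.\ \forall z'.\ \bigl(z \le x \wedge z' \le y \wedge \el(z,z')\bigr) \rightarrow \bigwedge_{a \in A}\bigl(P_a(z) \leftrightarrow P_a(z')\bigr).
\]
Here the guards $z \le x$ and $z' \le y$ force $z$ and $z'$ to lie on the traces of $x$ and $y$, respectively, and $\el(z,z')$ synchronizes their time indices. Since a \kltl sentence holds on $T$ iff it holds at the first position of every trace, the top-level sentence is $\phi = \forall x.\ \min(x) \rightarrow \phi_K^\ast(x)$, using $\min$ from Example~\ref{example_hyfolsyntacticsugar}.

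The main obstacle is verifying the knowledge clause: one must check that the conjunction $z \le x \wedge z' \le y \wedge \el(z,z')$ picks out exactly the pairs of level-equivalent positions on the two trace-prefixes determined by~$x$ and~$y$, without accidentally admitting positions on a third trace or beyond level~$n$. This follows from the definitions of $<^{\model{T}}$ and $\el^{\model{T}}$, since $\le$ only holds within a single trace and $\el$ only equates levels, but it is the one step that requires a careful case distinction. Once the translation is shown correct by induction, composing it with the separating pair from \cite{BozzelliMP15} immediately yields an \fole sentence~$\phi$ that is inequivalent to every \hyltl sentence, as required.
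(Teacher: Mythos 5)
Your proposal is correct in substance and follows the same skeleton as the paper's proof: both import the separating pairs from Bozzelli et al.~\cite{BozzelliMP15} wholesale and reduce the theorem to showing that the property used there is expressible in \fole. The difference lies in how that expressibility is established. The paper identifies the concrete property behind the Bozzelli et al.\ result --- over $\ap = \set{p}$, \emph{there is an $n>0$ such that $p \notin t(n)$ for every $t \in T$} --- and simply observes that it is an \fole sentence, essentially $\exists x.\ \neg\min(x) \wedge \forall y.\ \el(x,y) \rightarrow \neg P_p(y)$. You instead construct a general compositional translation of \kltl into \fole, including the knowledge operator via a prefix-equivalence formula, and then apply it to their formula. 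Your translation is sound: the guards $z \le x$ and $z' \le y$ do confine $z$ and $z'$ to the traces of $x$ and $y$ because $<^{\model{T}}$ only relates positions on a single trace, and $\el$ synchronizes levels, so $\mathit{obsEq}_A$ captures synchronous observational equivalence of prefixes. But this is considerably more machinery than the theorem needs, and it makes the argument depend on pinning down the exact \kltl semantics used in \cite{BozzelliMP15}; with trivial observables the knowledge operator degenerates to ``on all traces at the current time,'' which is why the one-line \fole sentence suffices. One small point both arguments must address (the paper does so in a footnote) is that Bozzelli et al.\ prove their inexpressibility result for \hyctlstar over Kripke structures, so the separating structures must first be converted into their trace languages $T_0$ and $T_1$; your phrase ``one can exhibit sets of traces'' silently assumes this step.
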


\begin{proof}
Fix $\ap = \set{p}$ and consider the following property of sets~$T$ of traces over $\ap$: there is an $n > 0$ such that $p \notin t(n)$ for every $t \in T$. This property is expressible in \fole, but Bozzelli et al.~\cite{BozzelliMP15} proved that it is not expressible in \hyltl by constructing sets~$T_0, T_1$ of traces with the desired property.\footnote{Actually, they proved a stronger result showing that the property cannot expressed in \hyctlstar, which subsumes \hyltl. As the latter logic is a branching-time logic, they actually constructed Kripke structures witnessing their result. However, it is easy to show that taking the languages of traces of these Kripke structures proves our claim.}
\end{proof}

As already noted by Bozzelli et al., the underlying insight is that \hyltl cannot express requirements which relate at some point in time an unbounded number of traces. By ruling out such properties, we obtain a fragment of \fole that is equivalent to \hyltl. Intuitively, we mimic trace quantification of \hyltl by quantifying initial positions and then only allow quantification of potentially non-initial positions on the traces already quantified. Thus, such a sentence can only express properties of the bounded number of traces selected by the quantification of initial positions.

To capture this intuition, we have to introduce some notation: 
$\exists^M x.\ \phi$ is shorthand for $\exists x.\ \min(x) \wedge \phi$ and $\forall^M x.\ \phi$ is shorthand for $\forall x.\ \min(x) \rightarrow \phi$, i.e., the quantifiers~$\exists^M$ and $\forall^M$ only range over the first positions of a trace in $\model{T}$. We use these quantifiers to mimic trace quantification in \hyltl.

Furthermore, $\exists^G y \ge x.\ \phi$ is shorthand for $\exists y.\ y \ge x \wedge \phi$ and $\forall^G y \ge x.\ \phi$ is shorthand for $\forall y.\ y \ge x \rightarrow \phi$, i.e., the quantifiers $\exists^G$ and $\forall^G$ are guarded by a free variable~$x$ and range only over greater-or-equal positions on the same trace that $x$ is on. We call the free variable~$x$ the \emph{guard} of the quantifier.

We consider sentences of the form
\begin{equation}
\phi = Q_1^M x_1.\cdots Q_k^M x_k.\ Q_{1}^G y_{1}\ge x_{g_1}.\cdots Q_{\ell}^G y_{\ell} \ge x_{g_\ell}.\  \psi
\label{guardedphi}
\end{equation}
with $Q \in \set{\exists, \forall}$, where we require the sets $\set{x_1, \ldots, x_k}$ and $\set{y_{1}, \ldots, y_{\ell}}$ to be disjoint, every guard~$x_{g_j}$ to be in $\set{x_1, \ldots, x_k}$, and $\psi$ to be quantifier-free with free variables among the $\set{y_{1}, \ldots, y_{\ell}}$. We call this fragment~\hyfol. Note that the subformula starting with the quantifier~$Q^G_1$ being in prenex normal form and $\psi$ only containing the variables~$y_j$ simplifies our reasoning later on, but is not a restriction. 

\begin{theorem}\label{theo_Hyperkamp}
\hyltl and \hyfol are equally expressive. 
\end{theorem}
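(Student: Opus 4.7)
The proof splits into two inclusions, one routine and one demanding. For $\hyltl \subseteq \hyfol$ the plan is a direct structural translation mirroring Kamp's theorem and lifted to hyperproperties via~$\el$. Starting from a \hyltl sentence $Q_1\pi_1\cdots Q_k\pi_k.\ \psi$ in prenex form, each trace quantifier $Q_i\pi_i$ becomes an initial-position quantifier $Q_i^M x_i$. The temporal kernel~$\psi$ is then translated inductively into an \fole formula with free variables $x_1,\ldots,x_k$ and a distinguished ``current time'' variable~$y$ that we maintain to lie on the trace of~$x_1$ (so $y \ge x_1$). Atomic formulas translate as $a_{\pi_i} \mapsto \exists z.\ \el(z,y) \wedge z \ge x_i \wedge P_a(z)$, while $\X$ and $\U$ introduce further quantifiers over positions at or above~$x_1$. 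Standard prenex conversion, together with the observation that every newly introduced quantifier can be guarded by~$x_1$, brings the result into the shape~(\ref{guardedphi}) required of \hyfol.

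The reverse inclusion $\hyfol \subseteq \hyltl$ is the main obstacle, and the plan is to reduce it to the classical Kamp theorem via a merge construction. Fix a \hyfol sentence $\phi = Q_1^M x_1 \cdots Q_k^M x_k.\ \phi_0(x_1,\ldots,x_k)$ of the form~(\ref{guardedphi}). Encode $k$-tuples of traces over~$\ap$ into single traces over the enlarged alphabet~$\ap' = \ap \times \set{1,\ldots,k}$ by the map $\merge(t_1,\ldots,t_k)(n) = \set{(a,i) \mid a \in t_i(n)}$. Rewrite $\phi_0$ as a plain \fol sentence~$\phi_0'$ over~$\ap'$ by turning each guarded variable~$y_j$ into a position variable~$z_j$ ranging over~$\nats$ and transforming the atoms of the kernel~$\psi$ via $P_a(y_j) \mapsto P_{(a,g_j)}(z_j)$, $\el(y_{j_1},y_{j_2}) \mapsto z_{j_1} = z_{j_2}$, and $y_{j_1} < y_{j_2} \mapsto z_{j_1} < z_{j_2}$ when $g_{j_1} = g_{j_2}$ and $\false$ otherwise (analogously for equality).

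By the classical Kamp theorem of Gabbay et al., $\phi_0'$ is equivalent to some \ltl formula~$\alpha$ over~$\ap'$. Replacing each atomic proposition $(a,i)$ in~$\alpha$ by $a_{\pi_i}$ produces a \hyltl formula~$\alpha^*$ with free trace variables $\pi_1,\ldots,\pi_k$, and the candidate equivalent \hyltl sentence is $Q_1\pi_1\cdots Q_k\pi_k.\ \alpha^*$. Correctness follows from a chain of equivalences that I would verify step by step: for any trace set~$T$ and assignment~$\Pi$ with $\Pi(\pi_i) = t_i$, the \hyltl semantics of $\alpha^*$ on~$(T,\Pi)$ matches the \ltl semantics of~$\alpha$ on~$\merge(t_1,\ldots,t_k)$, which by Kamp equals the \fol semantics of~$\phi_0'$ on that same trace, which by construction equals the \fole semantics of~$\phi_0$ on~$\model{T}$ under $x_i \mapsto (t_i,0)$.

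The delicate points to watch are: (i)~the faithfulness of the atomic substitutions, in particular that $y_{j_1} < y_{j_2}$ and $y_{j_1} = y_{j_2}$ are always false on~$\model{T}$ when the guards identify different traces, which is exactly why $\false$ is the correct replacement; (ii)~the paper's WLOG assumption that~$\psi$ mentions only the $y_j$'s, so that no residual~$x_i$ appears in the kernel and the merge encoding is self-contained; and (iii)~the fact that after the substitution $\phi_0'$ contains no~$\el$ and is thus a plain \fol sentence over~$\ap'$, to which the classical Kamp theorem applies verbatim.
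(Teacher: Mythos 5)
Your overall strategy coincides with the paper's in both directions: from \hyltl to \hyfol you give the standard inductive Kamp-style translation with a clock variable and guarded quantifiers (your use of $x_1$ as the clock guard instead of a dedicated fresh guard is cosmetic), and from \hyfol to \hyltl you use exactly the paper's route of merging the $k$ selected traces into one trace over $\ap\times\set{1,\ldots,k}$, eliminating $\el$ in favour of equality of positions, invoking the theorem of Gabbay et al., and un-merging the resulting \ltl formula under the original quantifier prefix.

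There is, however, a genuine flaw in your handling of the atoms $y_{j_1} < y_{j_2}$ and $y_{j_1} = y_{j_2}$ whose variables carry syntactically different guards. You replace these by $\false$, justified by the claim that such atoms are always false on $\model{T}$ ``when the guards identify different traces.'' But your substitution is triggered by the syntactic condition $g_{j_1}\neq g_{j_2}$, whereas the semantic condition is whether the valuation assigns the two guards the \emph{same} trace---and distinct guard variables may well denote the same trace (two universal $Q^M$-quantifiers both instantiated with the same $t\in T$, or an existential one choosing the trace of an earlier guard). Concretely, for $\phi=\forall^M x_1.\ \forall^M x_2.\ \exists^G y_1\ge x_1.\ \exists^G y_2\ge x_2.\ y_1<y_2$ and $T=\set{t}$ we have $\model{T}\models\phi$ (take $y_1=(t,0)$ and $y_2=(t,1)$), yet your translation turns the kernel into $\false$ and hence produces an unsatisfiable \hyltl sentence. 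The repair is to replace a cross-guard atom $y_{j_1}<y_{j_2}$ not by $\false$ but by $z_{j_1}<z_{j_2}$ conjoined with an \fol formula asserting that components $g_{j_1}$ and $g_{j_2}$ of the merged trace agree at every position, e.g.\ $\forall z.\ \bigwedge_{a\in\ap}\bigl(P_{(a,g_{j_1})}(z)\leftrightarrow P_{(a,g_{j_2})}(z)\bigr)$ (and analogously for cross-guard equality); this is exactly the condition under which the two guards denote the same element of $T$, and it stays within \fol over the merged alphabet, so Kamp's theorem still applies verbatim. (Note that the paper's own replacement list leaves such atoms untouched, which errs in the complementary case where the guards denote genuinely different traces; so you correctly isolated the delicate point, but resolved it incorrectly.)
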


We prove this result by presenting effective translations between \hyltl and \hyfol (see Lemma~\ref{lemma_hyfol2hyltl} and Lemma~\ref{lemma_hyltl2hyfol}). We begin with the direction from \hyfol to \hyltl. Consider a \hyfol sentence~$\phi$ as in (\ref{guardedphi}). It quantifies $k$ traces with the quantifiers~$\exists^M$ and $\forall^M$. Every other quantification is then on one of these traces. As trace quantification is possible in \hyltl, we only have to take care of the subformula starting with the guarded quantifiers. After replacing these quantifiers by unguarded  ones, we only have to remove the equal-level predicate to obtain an \fol sentence. To this end, we merge the $k$ traces under consideration into a single one, which reduces the equal-level predicate to the equality predicate (cf.~\cite{Thomas09}). The resulting sentence is then translated into \ltl using the theorem of Gabbay et al., the merging is undone, and the quantifier prefix is added again. We show that the resulting sentence is equivalent to the original one. 

Fix a \hyfol sentence~$\phi$ as in (\ref{guardedphi}) and consider the subformula
\[\chi = Q_{1}^G y_{1}\ge x_{g_1}. \cdots Q_{\ell}^G y_{\ell} \ge x_{g_\ell}.\ \psi
\]
obtained by removing the quantification of the guards. We execute the following replacements to obtain the formula~$\chi_m$:
\begin{enumerate}
	\item Replace every guarded existential quantification~$\exists^G y_{j}\ge x_{g_j}$ by $\exists y_{j}$ and every guarded universal quantification~$\forall^G y_{j}\ge x_{g_j}$ by $\forall y_{j}$.
	\item Replace every atomic formula~$P_a(y_j)$ by $P_{(a,g_j)}(y_j)$, where $x_{g_j}$ is the guard of $y_j$.
	\item Replace every atomic formula~$\el(y_{j},y_{j'})$ by $y_j = y_{j'}$. 
\end{enumerate}
As we have removed all occurrences of the free guards, the resulting formula~$\chi_m$ is actually a sentence over the signature~$\set{<} \cup \set{ P_a \mid a \in {\ap \times \set{1, \ldots, k}}}$, i.e., an \fol sentence. 

Given a list~$(t_{1}, \ldots, t_{k})$ of traces over $\ap$, define the trace~$\merge(t_{1}, \ldots, t_{k}) = A_0 A_1 A_2 \cdots$ over ${\ap \times \set{1, \ldots, k}}$ via 
$A_n = \bigcup_{j = 1}^{k} t_{j}(n) \times \set{j} $,
i.e., we merge the $t_j$ into a single trace. 

\begin{claim}
\label{claim_fomerge}
Let  $T$ be a set of traces and let $\beta_0 \colon \set{x_1, \ldots, x_k} \rightarrow T \times \set{0}$ be a variable valuation of the guards~$x_1, \ldots, x_k$ to elements of $\model{T}$. Then, $(\model{T},
\beta_0 )\models \chi$ if, and only if, $\model{\merge(t_{1}, \ldots, t_{k})} \models \chi_m$, where $t_{j}$ is the unique trace satisfying~$\beta_0(x_{g_j}) = (t_j, 0)$. 
\end{claim}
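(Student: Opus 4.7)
The plan is to prove the claim by induction on the number of guarded quantifiers remaining in the formula, strengthened so that the intermediate valuations of the bound $y$-variables are exposed. Concretely, I would show that for every $i \in \set{0,1,\ldots,\ell}$ and every extension $\beta_i$ of $\beta_0$ that maps each $y_j$ with $j \le i$ to an element of the form $(t_{g_j}, n_j) \in T \times \nats$, we have
\[(\model{T}, \beta_i) \models Q_{i+1}^G y_{i+1} \ge x_{g_{i+1}}.\cdots Q_\ell^G y_\ell \ge x_{g_\ell}.\ \psi\]
iff $(\model{\merge(t_1,\ldots,t_k)}, \gamma_i) \models Q_{i+1} y_{i+1}.\cdots Q_\ell y_\ell.\ \psi_m$, where $\gamma_i$ sends $y_j$ to $n_j$ and $\psi_m$ is the syntactic rewrite of $\psi$ described in the translation. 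The original claim is the instance $i=0$.

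The driving observation for the inductive step is that, because $<^{\model{T}}$ only relates elements on a common trace and $\beta_i(x_{g_{i+1}}) = (t_{g_{i+1}}, 0)$ is the minimum element of the trace $t_{g_{i+1}}$, every element $e \in T \times \nats$ satisfying the guard $e \ge \beta_i(x_{g_{i+1}})$ is automatically of the form $(t_{g_{i+1}}, n_{i+1})$ for some $n_{i+1} \in \nats$. The admissible witnesses (respectively counterexamples) for the guarded quantifier over $\model{T}$ are therefore in canonical bijection with $\nats$, which is exactly the range of the corresponding unguarded quantifier over $\model{\merge(t_1,\ldots,t_k)}$. Applying this bijection, the existential and universal inductive steps reduce immediately to the induction hypothesis applied to the body.

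For the base case $i = \ell$ I proceed by a structural induction on the quantifier-free $\psi$; Boolean combinations are routine, so the content is in the atomic cases. For $P$-atoms, $P_a$ holds at $(t_{g_j}, n_j)$ in $\model{T}$ iff $a \in t_{g_j}(n_j)$, which by the definition $A_n = \bigcup_{j'} t_{j'}(n) \times \set{j'}$ of $\merge$ is equivalent to $P_{(a, g_j)}$ holding at position $n_j$ of the merged trace. For the equal-level atoms, $\el((t_{g_j}, n_j), (t_{g_{j'}}, n_{j'}))$ and its replacement $y_j = y_{j'}$ in the single-trace merged structure both reduce to $n_j = n_{j'}$.

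The main obstacle is more conceptual than technical: the predicates $<$ and $=$ in $\model{T}$ implicitly encode which trace a position lives on (both are false across distinct traces), whereas after merging they only compare position indices on the single merged trace. The resolution is exactly the key observation above: each variable $y_j$ is pinned by its guard to the fixed trace $t_{g_j}$ determined by $\beta_0$, so any atom $y_j \odot y_{j'}$ encountered during the base-case induction can be interpreted unambiguously with respect to those specific traces, and both structures reduce to the corresponding comparison of the $n$-components. This is the step where the structural restriction defining \hyfol earns its keep.
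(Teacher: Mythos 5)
Your overall strategy --- induction on the guarded quantifier prefix, driven by the observation that the guard $y_j \ge x_{g_j}$ pins $y_j$ to the trace $t_{g_j}$ (since $<^{\model{T}}$ only relates positions on a common trace and $\beta_0(x_{g_j})$ is the minimum of $t_{g_j}$), hence puts the admissible witnesses in bijection with $\nats$, followed by a structural induction over the quantifier-free matrix --- is in content the same argument the paper gives. The paper merely packages it as a translation of winning strategies between the two first-order model-checking games, with your pinning observation appearing as the step where Verifier's witness for $\exists y_j.\ y_j \ge x_{g_j} \wedge \theta'$ must lie on $t_{g_j}$ lest Falsifier win on the guard conjunct. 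So the decomposition is not genuinely different, just stated semantically rather than game-theoretically.

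There is, however, a concrete problem in your base case, precisely at the point you single out as the main obstacle. You claim that an atom $y_j \odot y_{j'}$ with $\odot \in \set{=,<}$ ``reduces to the corresponding comparison of the $n$-components'' in both structures. In $\model{\merge(t_1,\ldots,t_k)}$ it does; in $\model{T}$ it does not when $t_{g_j} \neq t_{g_{j'}}$: there $(t_{g_j},n_j) = (t_{g_{j'}},n_{j'})$ is false for all $n_j, n_{j'}$, and $<^{\model{T}}$ relates no two elements on distinct traces, so both atoms are identically false, while their (unchanged) counterparts in the merged structure evaluate to $n_j = n_{j'}$ resp.\ $n_j < n_{j'}$. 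Concretely, take $\psi = \neg(y_1 = y_2) \wedge \el(y_1,y_2)$ with guards $g_1 = 1$, $g_2 = 2$ and $t_1 \neq t_2$: then $\chi = \exists^G y_1 \ge x_1.\ \exists^G y_2 \ge x_2.\ \psi$ holds in $(\model{T},\beta_0)$ (choose the two initial positions), whereas $\chi_m = \exists y_1.\ \exists y_2.\ \neg(y_1 = y_2) \wedge y_1 = y_2$ is unsatisfiable. So the biconditional you are proving fails for such $\psi$, and your base case cannot close as written; one must either additionally rewrite $=$ and $<$ atoms whose arguments have distinct guards (e.g., to $\false$ for $<$, and to a guard-aware form for $=$), or restrict $\psi$ so that $=$ and $<$ only ever compare variables sharing a guard (which is all the HyperLTL-to-\hyfol translation ever produces). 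For what it is worth, the paper's own appendix proof shares this blind spot: it verifies only the positive atoms $y_j = y_{j'}$ and $y_j < y_{j'}$, where the needed implication does hold, and dismisses the negated atoms as ``dual'', which is exactly where the argument breaks. So the defect lies in the translation as stated rather than in your particular route, but your write-up inherits it and asserts the false reduction explicitly, so it must be counted as a gap.
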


This claim can be proven by translating a winning strategy for either player in the model checking game~\cite{Gradeletal2005}  for $(\model{T}, \chi)$ (starting with the initial variable valuation~$\beta_0$) into a winning strategy for the same player in the model checking game for $(\model{\merge(t_{1}, \ldots, t_{k})}, \chi_m )$. 

Now, we apply the theorem of Gabbay et al.~\cite{Gabbayetal80} to $\chi_m$ and obtain an \ltl formula $\chi_m'$ over $\ap \times \set{1, \ldots, k}$ that is equivalent to $\chi_m$. Let $\chi'$ be the \hyltl formula obtained from $\chi_m'$ by replacing every atomic proposition~$(a,j)$ by $a_{\pi_{j}}$, i.e., we undo the merging. The following claim is proven by a simple structural induction over $\chi_m$. 

\begin{claim}
\label{claim_ltlunmerge}
Let  $T$ be a set of traces and let $\Pi \colon \set{\pi_1, \ldots, \pi_k} \rightarrow T$ be a trace assignment. Then, $\merge(\Pi(\pi_1), \ldots, \Pi(\pi_k)) \models \chi_m'$ if, and only if, 
 $(T,\Pi) \models \chi'$. 
\end{claim}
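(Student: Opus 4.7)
The plan is to prove the claim by structural induction on the LTL formula~$\chi_m'$, not directly but in a strengthened form that accommodates the temporal operators. Specifically, I will show by induction on subformulas~$\alpha$ of $\chi_m'$ (with corresponding HyperLTL subformula~$\alpha'$ obtained by the same substitution~$(a,j) \mapsto a_{\pi_j}$) that for every $j \ge 0$ and every trace assignment~$\Pi$ with domain~$\set{\pi_1,\ldots,\pi_k}$,
\[
\suffix{\merge(\Pi(\pi_1),\ldots,\Pi(\pi_k))}{j} \models \alpha \quad\Longleftrightarrow\quad (T, \suffix{\Pi}{j}) \models \alpha'.
\]
The original claim is then the instance~$j=0$.

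The first ingredient is a simple commutation property: from the pointwise definition~$A_n = \bigcup_{i=1}^k t_i(n) \times \set{i}$, one obtains~$\suffix{\merge(t_1,\ldots,t_k)}{j} = \merge(\suffix{t_1}{j},\ldots,\suffix{t_k}{j})$ for every $j$. This is what allows the suffix-based LTL semantics on the merged trace to be aligned with the suffix-based HyperLTL semantics on the trace assignment.

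For the induction itself, the base case is a direct unfolding of the definitions: $(a,i) \in \suffix{\merge(\Pi(\pi_1),\ldots,\Pi(\pi_k))}{j}(0)$ holds iff $a \in \Pi(\pi_i)(j)$, which is exactly the condition for $(T,\suffix{\Pi}{j}) \models a_{\pi_i}$. The boolean cases~$\neg \beta$ and $\beta_1 \vee \beta_2$ are immediate from the induction hypothesis. For~$\X \beta$, one passes from the suffix at position~$j$ to the suffix at position~$j+1$ on both sides and invokes the induction hypothesis; for~$\beta_1 \U \beta_2$, the usual witness position~$j' \ge j$ is shared between the two semantics, again using the commutation property above to identify the shifted merged trace with the merge of the shifted traces.

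I do not expect a real obstacle here: the argument is entirely routine once the commutation property for~$\merge$ and shifts is noted, and the correspondence~$(a,i) \leftrightarrow a_{\pi_i}$ is precisely engineered so that the atomic case works. The only mild subtlety is that the claim as stated is not directly amenable to induction (it is formulated for the top level only), so one must generalize to arbitrary suffixes of~$\Pi$ before starting the induction, exactly as above.
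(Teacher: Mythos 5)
Your proposal is correct and follows exactly the route the paper takes: the paper dispatches this claim with the remark that it is ``proven by a simple structural induction over $\chi_m$,'' and your strengthened induction hypothesis over arbitrary suffixes~$j$, together with the commutation of $\merge$ with taking suffixes, is precisely the routine elaboration of that induction (the atomic case works by design of the substitution $(a,i)\mapsto a_{\pi_i}$, and the temporal cases go through because $\suffix{\merge(t_1,\ldots,t_k)}{j}=\merge(\suffix{t_1}{j},\ldots,\suffix{t_k}{j})$). No gap; your generalization to suffixes before starting the induction is the right (and necessary) move.
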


Now, we add the quantifier prefix $Q_1 \pi_1. \cdots Q_k \pi_k.$ to $\chi'$, where $Q_j = \exists$, if $Q_j^M = \exists^M$, and $Q_j = \forall$, if $Q_j^M = \forall^M$. Call the obtained \hyltl sentence $\phi'$. 

\begin{lemma}
\label{lemma_hyfol2hyltl}
For every \hyfol sentence~$\phi$, there is a \hyltl sentence~$\phi'$ such that for every $T \subseteq (\pow{\ap})^\omega$: $\model{T} \models \phi$ if, and only if, $T \models \phi'$. 
\end{lemma}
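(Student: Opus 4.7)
The plan is to verify that the $\phi'$ already constructed in the paragraphs preceding the lemma is equivalent to the given $\phi$. The construction itself is complete; what remains is to string together Claim~\ref{claim_fomerge}, the theorem of Gabbay et al., and Claim~\ref{claim_ltlunmerge} at the level of the quantifier-free guarded body, and then lift this equivalence to the full sentence by a straightforward induction on the prefix of $M$-quantifiers.

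First I would establish the core equivalence at the level of $\chi$ and $\chi'$. Fix a set~$T$ of traces and a trace assignment~$\Pi \colon \set{\pi_1,\ldots,\pi_k} \rightarrow T$, and let $\beta_0$ denote the valuation mapping each~$x_j$ to $(\Pi(\pi_j),0) \in T \times \set{0}$. Then Claim~\ref{claim_fomerge} gives $(\model{T},\beta_0) \models \chi$ iff $\model{\merge(\Pi(\pi_1),\ldots,\Pi(\pi_k))} \models \chi_m$; the theorem of Gabbay et al.\ gives that this is equivalent to $\merge(\Pi(\pi_1),\ldots,\Pi(\pi_k)) \models \chi_m'$; and Claim~\ref{claim_ltlunmerge} identifies this in turn with $(T,\Pi) \models \chi'$. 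Chaining these three equivalences yields $(\model{T},\beta_0) \models \chi$ iff $(T,\Pi) \models \chi'$ for every $\Pi$ and the corresponding $\beta_0$.

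Next I would lift this to the full prefix by induction on $j \in \set{0,\ldots,k}$, establishing that the partially quantified formula $Q_{j+1}^M x_{j+1}.\cdots Q_k^M x_k.\ \chi$ evaluated in $\model{T}$ under a valuation sending $x_1,\ldots,x_j$ to $(\Pi(\pi_1),0),\ldots,(\Pi(\pi_j),0)$ holds iff $Q_{j+1}\pi_{j+1}. \cdots Q_k \pi_k.\ \chi'$ holds at $(T,\Pi)$. The base case $j=k$ is exactly the equivalence of the previous paragraph; for the inductive step, the $M$-quantifier over $x_{j+1}$ ranges precisely over the initial points $(t,0)$ with $t \in T$, which are in canonical bijection with the traces~$t \in T$ quantified by the matching $Q_{j+1} \pi_{j+1}$, and the choice~$Q_{j+1} = \exists$ iff $Q_{j+1}^M = \exists^M$ (and dually for $\forall$) makes quantifier polarity line up. Taking $j=0$ and $\Pi = \Pi_\emptyset$ finally yields $\model{T} \models \phi$ iff $T \models \phi'$.

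The main obstacle does not lie in the lemma itself but in Claim~\ref{claim_fomerge}, which is what makes the reduction from $\fole$ to $\fol$ possible: once the guards~$x_1,\ldots,x_k$ are pinned to initial positions $(t_1,0),\ldots,(t_k,0)$, every subsequent guarded quantifier~$Q_j^G y_j \ge x_{g_j}$ is confined to the copy of~$\nats$ belonging to~$t_{g_j}$, so no quantifier ever escapes the $k$ pre-selected traces; merging these traces over the enlarged alphabet~$\ap \times \set{1,\ldots,k}$ then faithfully encodes the individual trace memberships via the new atomic predicates, and the equal-level predicate collapses to equality on the merged structure. The game-theoretic translation of winning strategies sketched for the claim handles this correspondence uniformly in the quantifier structure. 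Claim~\ref{claim_ltlunmerge} is comparatively routine, since the syntactic substitution $(a,j) \mapsto a_{\pi_j}$ simply reverses the alphabet merge pointwise along the \ltl semantics.
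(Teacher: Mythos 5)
Your proposal is correct and follows essentially the same route as the paper: chain Claim~\ref{claim_fomerge}, the Gabbay et al.\ translation, and Claim~\ref{claim_ltlunmerge} to get the equivalence of $\chi$ and $\chi'$ under corresponding valuations, then lift it through the identical quantifier prefixes. The paper merely states the lifting step as an identification of initial-position quantification with trace quantification, whereas you spell it out as an explicit induction on the prefix length; this is the same argument in more detail.
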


\begin{proof}
Fix a \hyfol sentence~$\phi$ and let the $\chi$, $\chi_m$, $\chi_m'$, $\chi'$, and  $\phi'$ be as constructed as above. Let $\beta_0$ be a variable valuation as in Claim~\ref{claim_fomerge}, let the traces~$t_1, \ldots, t_k \in T$ be defined as in this claim, and let the trace assignment $\Pi$ map $\pi_j$ to $t_j$.

 Then, the following equivalences hold:
\[
(\model{T},\beta_0) \models \chi \myeq{Claim \ref{claim_fomerge}} \model{\merge(t_{1}, \ldots, t_{k})} \models \chi_m  
\myeq{by def.} \merge(t_{1}, \ldots, t_{k}) \models \chi_m' 
\myeq{Claim \ref{claim_ltlunmerge}} (T,\Pi) \models \chi'.
\]

Finally, the equivalence of $\phi$ and $\phi'$ follows from the fact that one can identify quantification of initial elements of paths in $\model{T}$ and trace quantification in $T$, as both $\phi$ and $\phi'$ have the \textit{same} quantifier prefix.
\end{proof}

It remains to consider the translation of \hyltl into \hyfol, which is straightforward, as usual.

\begin{lemma}
\label{lemma_hyltl2hyfol}
For every \hyltl sentence~$\phi$, there is a \hyfol sentence~$\phi'$ such that for every $T \subseteq (\pow{\ap})^\omega$: $T \models \phi$ if, and only if, $\model{T} \models \phi'$. 
\end{lemma}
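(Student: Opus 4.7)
The plan is a direct structural induction on the body of the \hyltl sentence, mirroring the standard Kamp-style translation from \ltl to \fol. Given a \hyltl sentence $\phi = Q_1 \pi_1.\, \cdots Q_k \pi_k.\ \psi$ with quantifier-free body $\psi$, I build an \fole formula $\tau(\psi)$ with free ``time'' variables $y_1, \ldots, y_k$ (one per quantified trace) satisfying the invariant: for every set~$T$, every trace assignment~$\Pi$ with domain $\set{\pi_1, \ldots, \pi_k}$, and every $n \in \nats$,
\[(T, \suffix{\Pi}{n}) \models \psi \quad \Leftrightarrow \quad (\model{T}, \beta_n) \models \tau(\psi),\]
where $\beta_n$ maps each $y_j$ to $(\Pi(\pi_j), n)$.

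The base case is $\tau(a_{\pi_j}) = P_a(y_j)$, and the boolean connectives commute with $\tau$. For $\X\psi$ I set $\tau(\X\psi) = \exists z_1.\, \cdots \exists z_k.\ \bigwedge_{j} \successor(y_j, z_j) \wedge \tau(\psi)[y_j/z_j]$; note that $\successor$ automatically preserves the equal-level invariant on the $z_j$'s. For $\psi_1 \U \psi_2$ the translation uses an existential block $z_1, \ldots, z_k$ witnessing the fulfillment of $\psi_2$, constrained by $y_j \le z_j$ and $\el(z_j, z_{j'})$ for all $j, j'$, together with a universal block $w_1, \ldots, w_k$ enforcing $\psi_1$ on the intermediate range $y_j \le w_j < z_j$ with $\el(w_j, w_{j'})$.

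I then convert to prenex normal form and guard every quantifier of $\tau(\psi)$ by the appropriate initial-position variable: each fresh $z_j$ or $w_j$ lives on trace $\pi_j$ and so satisfies $z_j, w_j \ge x_j$, yielding the guarded-prenex shape required by \hyfol. Prepending the prefix $Q_1^M x_1.\, \cdots Q_k^M x_k.$ with $Q_j^M = \exists^M$ if $Q_j = \exists$ and $Q_j^M = \forall^M$ otherwise, and instantiating the invariant at $n = 0$ via $y_j := x_j$, produces the \hyfol sentence $\phi'$. A small cosmetic step wraps any residual occurrences of $x_j$ in the matrix by fresh guarded copies $y_j^0$ (introduced via $\exists^G y_j^0 \ge x_j.\ y_j^0 = x_j$) to fit the syntactic convention that the quantifier-free body uses only $y$-style variables.

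Correctness of $\tau$ is a routine structural induction on $\psi$; correctness of the outer prefix mirrors the argument in Lemma~\ref{lemma_hyfol2hyltl}, identifying each trace $t \in T$ with its initial position $(t, 0) \in \model{T}$. The main obstacle is the bookkeeping: I must check that every freshly introduced variable can be consistently guarded by some $x_j$ (which holds because it lives on a specific trace $\pi_j$), and that the invariant ``the $y_j$'s are pairwise equal-level'' is propagated at every inductive step---automatic for $\X$ via $\successor$, and explicit for $\U$ via the inserted $\el$ atoms.
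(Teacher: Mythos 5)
Your translation is correct, but it takes a somewhat different route from the paper's in how the flow of time is tracked. The paper keeps a \emph{single} time variable living on one designated trace (introduced by $\exists^M x_t$) and uses the equal-level predicate only in the atomic case, $\fo{a_{\pi_j}} = \exists^G y \ge x_j.\ \el(y,z) \wedge P_a(y)$, to transfer the current time point from that time-trace onto the trace bound to $\pi_j$; each temporal operator then introduces only one or two new quantifiers, all guarded by $x_t$. You instead carry $k$ synchronized time variables, one per quantified trace, and re-synchronize them at every step with pairwise $\el$-atoms (explicitly in the $\U$ case, implicitly via $\successor$ in the $\X$ case). Both are sound: your invariant is exactly the right one, the guards follow from $z_j \ge y_j \ge x_j$, the $\el$-constraints do force a common level for the witness and intermediate blocks, and guarded quantifiers dualize correctly under prenexing, so the result lands in \hyfol. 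What the paper's single-time-line trick buys is economy: your variant incurs a factor-$k$ blow-up in quantifiers per temporal operator and quadratically many $\el$-atoms. One small syntactic point to repair: the patch $\exists^G y_j^0 \ge x_j.\ y_j^0 = x_j$ leaves $x_j$ occurring in the quantifier-free matrix, which the \hyfol definition forbids (the matrix may only mention the $y$-variables); assert instead that $y_j^0$ is minimal on its trace via a guarded universal such as $\forall^G w_j \ge x_j.\ y_j^0 \le w_j$. (The paper's own closing step $\exists^M x_t.\ \exists^M z.\ x_t = z \wedge \fo{\phi}$ requires the same kind of massaging, so this is a shared, easily fixed blemish rather than a gap.)
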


\begin{proof}
Let $\pi_1, \ldots, \pi_k$ be the trace variables appearing in $\phi$ and fix a set~$G = \set{x_1, \ldots, x_k, x_t}$ of first-order variables, which we use as guards: the $x_j$ with $j \le k$ are identified with the trace variables and we use variables guarded by $x_t$ to model the flow of time. 
We inductively construct a formula~$\fo{\phi}$ satisfying the following invariant: For each subformula $\psi$ of $\phi$, the free variables of the formula~$\fo{\psi}$ comprise of a subset of $G$ and one additional (different!) variable, which we call the time-variable of $\fo{\psi}$. We require the time-variables of the subformulas to be fresh unless stated otherwise and also different from the guards in $G$. Intuitively, the time-variables are used to mimic the flow of time when translating a temporal operator. Formally, we define:

\begin{itemize}
	
	\item $\fo{a_{\pi_j}} = \exists^G y \ge x_j.\ \el(y,z) \wedge P_a(y)$, i.e., $z$ is the time-variable of $\fo{a_{\pi_j}}$.
	
	\item $\fo{\neg \psi_1} = \neg\fo{\psi_1}$, i.e., the time-variable is unchanged.
	
	\item $\fo{\psi_1 \vee \psi_2} = \fo{\psi_1'} \vee \fo{\psi_2}$, where we assume w.l.o.g.\ that $\fo{\psi_1}$ and $\fo{\psi_2'}$ have the same time-variable, which is also the time-variable of the disjunction.
	
	\item $\fo{\X \psi_1} = \exists^G z_1 \ge x_t.\ \succ(z,z_1) \wedge \fo{\psi_1}$, where $z_1$ is the time-variable of $\fo{\psi_1}$. Hence, $z$ is the time-variable of $\fo{\X \psi_1}$.
	
	\item $\fo{\psi_1 \U \psi_2} = \exists^G z_2 \ge x_t.\ z \le z_2 \wedge \fo{\psi_2} \wedge \forall^G z_1 \ge x_t.\ z \le z_1 < z_2 \rightarrow \fo{\psi_1}$, where $z_i$ is the time-variable of $\fo{\psi_i}$. Hence, $z$ is the time-variable of $\fo{\psi_1 \U \psi_2}$.
	
	\item $\fo{\exists \pi_j.\ \psi} = \exists^M x_j.\ \fo{\psi}$, i.e., the time-variable is unchanged.
	
	\item $\fo{\forall \pi_j.\ \psi} = \forall^M x_j.\ \fo{\psi} $, i.e., the time-variable is unchanged.
	
\end{itemize} 

Now, we define $\phi' = \exists^M x_t.\ \exists^M z.\   x_t = z \wedge \fo{\phi}$, where $z$ is the time-variable of $\fo{\phi}$. It is straightforward to show that $\phi'$ is equivalent to $\phi$. Finally, $\phi'$ can be rewritten into prenex normal form (with quantifiers~$Q^M$ and $Q^G$!) so that the outermost quantifiers bind the guards while the inner ones are guarded.
\end{proof}


\section{Conclusion and Discussion}
\label{sec_conc}

The extension from LTL to HyperLTL has fundamentally changed the
models of the logic. While a satisfiable LTL formula is guaranteed to
have an ultimately periodic model, we have shown that there is no
guarantee that a satisfiable HyperLTL formula has a model that is
finite, $\omega$-regular, or even just $\omega$-contextfree.
Characterizing the expressive power of HyperLTL is thus a formidable challenge.
Nevertheless, the results of this paper provide a first such characterization. With the definition of $\fole$ and \hyfol,
and the resulting formulation and proof of Kamp's theorem for
hyperproperties, we have established the first connection between temporal
logics for hyperproperties and first-order logic. This connection
provides a strong basis for a systematic exploration of the models
of hyperproperties.

While hyperproperties have recently received a lot of attention from a
practical perspective
(cf.~\cite{DBLP:conf/csfw/AgrawalB16,DBLP:conf/post/ClarksonFKMRS14,DBLP:conf/cav/FinkbeinerRS15}),
their logical and language-theoretic foundations are far less
understood, and it is our hope that this paper will attract more
research into this exciting area.
An important open problem is to find a non-trivial class of languages so that every satisfiable HyperLTL formula is guaranteed to be satisfied by a model from this class.
In Section~\ref{sec_models}, we have ruled out some of the obvious candidates for such a class of languages, such as the  $\omega$-regular and $\omega$-contextfree languages. The challenge remains to identify a class of languages that is rich enough for every satisfiable HyperLTL formula.

Another major open problem is to find a temporal logic that is expressively equivalent to $\fole$.
In Section~\ref{sec_fo}, we have shown that HyperLTL is less expressive than $\fole$,
by arguing that HyperLTL cannot express requirements which relate at some point in time an unbounded number of traces. Since \kltl~\cite{FaginHMV95} can express such properties, \kltl and related epistemic temporal logics are natural candidates for logics that are expressively equivalent to $\fole$. Another promising candidate is HyperLTL with past operators, motivated by the results on \hyctlstar with past~\cite{BozzelliMP15}.

\paragraph*{Acknowledgements.} We thank Markus N.\ Rabe and Leander Tentrup for fruitful discussions.


\bibliographystyle{splncs03}
\bibliography{biblio}


\newpage
\appendix
\section{Appendix}

In this appendix, we present the proof of Claim~\ref{claim_fomerge} based on model checking games for first-order logic (see, e.g., \cite{Gradeletal2005} for details on these games).

Fix a set~$T$ of traces and a formula~$\chi$ as in Claim~\ref{claim_fomerge} and recall that the model checking game~$\game(\model{T}, \chi)$ is a finite zero-sum game of perfect information between the players \emph{Verifier} and \emph{Falsifier} where Verifier's goal is to prove that the formula $\chi$ holds in $\model{T}$. We assume w.l.o.g.\ that $\chi$ is in negation normal form, i.e., negations only appear in front of atomic formulas. 

A position of $\game(\model{T}, \chi)$ consists of a subformula~$\theta$ of $\chi$ and a variable valuation~$\beta$ mapping the free variables of $\theta$ to elements of $\model{T}$'s domain. If $\theta$ is a (possibly negated) atomic formula, then the position is terminal and is winning for Verifier, if
\begin{itemize}
	
	\item $\theta = (x = y)$ and $\beta(x) = \beta(y)$,
	
	\item $\theta = (x < y)$ and $(\beta(x), \beta(y)) \in <^{\model{T}}$ (i.e., $\beta(x) = (t, n)$, and $\beta(y) = (t',n')$ with $t = t'$ and $n < n'$),
	
	\item $\theta = E(x ,y)$ and $(\beta(x), \beta(y)) \in E^{\model{T}}$ (i.e., $\beta(x) = (t, n)$, and $\beta(t',n')$ with $n = n'$), or
	
	\item $\theta = P_a(x)$ and $\beta(x) \in P_a^{\model{T}}$.

	\item $\theta = \neg(x = y)$ and $\beta(x) \neq \beta(y)$,
	
	\item $\theta = \neg(x < y)$ and $(\beta(x), \beta(y)) \notin <^{\model{T}}$,
	
	\item $\theta = \neg E(x ,y)$ and $(\beta(x), \beta(y)) \notin E^{\model{T}}$, or
	
	\item $\theta = \neg P_a(x)$ and $\beta(x) \notin P_a^{\model{T}}$.
	
\end{itemize}
Every other terminal position is winning for Falsifier.

The moves at non-terminal positions are defined as follows: 
\begin{itemize}
	\item It is Verifier's turn at $(\theta_0 \vee \theta_1, \beta)$, where she has to pick one of the successors~$(\theta_0, \beta)$ or $(\theta_1, \beta)$.

	\item It is Falsifier's turn at $(\theta_0 \wedge \theta_1, \beta)$ where he has to pick one of the successors~$(\theta_0, \beta)$ or $(\theta_1, \beta)$.
	
	\item It is Verifier's turn at $(\exists x.\ \theta, \beta)$ where she has to pick one of the successors~$(\theta, \beta[x \mapsto (t,n)])$ for every element~$(t,n)$ of $\model{T}$'s domain. Here, $\beta[x \mapsto (t,n)]$ is the variable valuation obtained from $\beta$ by adding $x$ to its domain and mapping it to $(t,n)$.

	\item It is Falsifier's turn at $(\forall x.\ \theta, \beta)$ where he has to pick one of the successors~$(\theta, \beta[x \mapsto (t,n)])$ for every element~$(t,n)$ of $\model{T}$'s domain.
\end{itemize}

A strategy~$\sigma$ for either player~$P$ is a mapping that assigns to each non-terminal position at which it is $P$'s turn a successor. A strategy~$\sigma$ is winning from a given position, if every path starting in this position that is consistent with $\sigma$ (i.e., uses the designated successor at every position of $P$) ends in a terminal position that is winning for $P$. 

It is well-known that Verifier has a winning strategy for $\game(\model{T}, \chi)$ from a position~$(\theta, \beta)$ if, and only if, $(\model{T}, \beta) \models \theta$. Due to determinacy of finite games, we also have that Falsifier has a winning strategy for $\game(\model{T}, \chi)$ from $(\theta, \beta)$ if, and only if, $(\model{T}, \beta) \not\models \theta$.

The model checking game $\game(\model{\merge(t_1, \ldots, t_k)}, \chi_m)$ for $(\model{\merge(t_1, \ldots, t_k)}, \chi_m)$ is defined analogously and the same characterization of (non-)satisfaction in terms of the existence of a winning strategy for Verifier (Falsifier) holds. 

\begin{proof}[Proof of Claim \ref{claim_fomerge}]
For the sake of readability, we denote $\game(\model{T},\chi) $  by $\game$ and accordingly $\game(\model{\merge(t_1, \ldots, t_k)}, \chi_m)$ by $\game_m$. 

Recall that we have
\[
\chi = Q_{1}^G y_{1}\ge x_{g_1}. \cdots Q_{\ell}^G y_{\ell} \ge x_{g_\ell}.\ \psi.
\]
and
\[
\chi_m = Q_{1} y_{1}. \cdots Q_{\ell} y_{\ell}.\ \psi_m
\]
where $\psi_m$ is obtained from $\psi$ by replacing every equal-level predicate~$\el$ by equality and by replacing every atomic formula~$P_a(y_j)$ by $P_{(a, g_j)}(y_j)$. In particular, the structure of the formulas (and hence the structure of the induced model checking games) is very similar, only the atomic formulas differ. Thus, we can define a mapping~$f$ from positions of $\game_m$ to positions of $\game$ as follows: a position~$(\theta_m, \beta_m)$ of $\game_m$ is mapped to $(\theta, \beta)$, where $\theta$ is the subformula of $\chi$ corresponding to $\theta_m$, and where $\beta$ is defined as follows:
$\beta(x_j) = (t_j, 0)$ and $\beta(y_j) = (t_{g_j}, \beta_m(y_j))$, i.e., we use the guard~$x_{g_j}$ of $y_j$ to determine the trace to which we map $y_j$ in $\model{T}$.

Now, to prove our claim, it suffices to show that a winning strategy for Verifier (Falsifier) in $\game$ from the position~$(\chi, \beta_0)$ can be translated into a winning strategy for Verifier (Falsifier) in $\game_m$ from $(\chi, \beta_\emptyset)$, where $\beta_\emptyset$ is the variable valuation with empty domain. Here, we only present the translation for Verifier. The translation of a winning strategy for Falsifier is analogous.

Thus, fix a winning strategy~$\sigma$ for Verifier in $\game$ from the position~$(\chi, \beta_0)$. To define the strategy~$\sigma_m$ for Verifier in $\game_m$, consider a non-terminal position~$(\theta_m, \beta_m)$ of $\game_m$ at which it is Verifier's turn, and let $f(\theta_m, \beta_m) = (\theta, \beta)$. We have to consider two cases:
\begin{enumerate}

	\item If $\theta_m $ is a disjunction, then $\theta$ is also a disjunction. In this case, we mimic the choice of $\sigma$ at $\theta$, i.e., if $\sigma$ in $\game$ picks the first (second) disjunct, then we define $\sigma_m$ to pick the first (second) disjunct as well.
	
	\item If $\theta_m = \exists y_j. \theta_m'$, then $\theta = \exists y_j. y_j \ge x_{g_j} \wedge \theta'$. Now, let $(\theta', \beta')$ be the successor of $(\theta, \beta)$ picked by  $\sigma$. Then, $\beta'(y_j)$ is on the same trace as $\beta'(x_j) = \beta(x_j)$, as otherwise Falsifier could move to the conjunct $(y_j \ge x_{g_j}, \beta')$, which is then winning for Falsifier. However, this contradicts the strategy being winning. Hence, $\beta'(y_j) = (t_{g_j}, n)$ for some $n$. We define  $\sigma_m$ so that it picks the successor $(\theta_m', \beta_m[y_j \mapsto n])$.
\end{enumerate} 

Now, consider a path from the initial position~$(\chi, \beta_\emptyset)$ of $\game_m$ to some terminal position~$(\theta_m, \beta_m)$ that is consistent with $\sigma_m$. We have to show that this terminal position is winning for Verifier. A straightforward induction shows that mapping the path pointwise to positions of $\game$ using $f$ yields a path of $\game$ from the initial position~$(\chi, \beta)$ that is consistent with $\sigma$. Hence, the terminal position~$f(\theta_m, \beta_m) = (\theta, \beta)$ reached in $\game$ is winning for Verifier. Furthermore, the property
\begin{equation}
\label{eq_beta}	\beta(y_j) = (t_{g_j}, \beta_m(y_j))
\end{equation} is satisfied for every $y_j$ by construction of the paths. 

We conclude by a case distinction over the types of (negated) atomic formulas.
\begin{itemize}
	
	\item If $\theta = (y_{j} = y_{j'})$, then we have $\beta(y_j) = \beta(y_{j'})$.
	Also, $\theta_m = (y_j = y_{j'})$ by definition. Thus, (\ref{eq_beta}) implies that $(\theta_m, \beta_m)$ is winning for Verifier.
	
	\item If $\theta = E(y_j, y_{j'})$, then we have $\beta(y_j) = (t, n)$ and  $ \beta(y_{j'}) = (t', n)$ for some $n$.  Also, $\theta_m = (y_j = y_{j'})$ by definition. Thus, (\ref{eq_beta}) implies that $(\theta_m, \beta_m)$ is winning for Verifier.
	
	\item If $\theta = (y_{j} < y_{j'})$, then we have $\beta(y_j) = (t, n)$ and  $ \beta(y_{j'}) = (t', n')$ for some $n,n'$ with $n < n'$.
	Also, $\theta_m = (y_j < y_{j'})$ by definition. Thus, (\ref{eq_beta}) implies that $(\theta_m, \beta_m)$ is winning for Verifier.
	
	\item If $\theta = P_a(y_j)$, then we have $\beta(y_j) = (t_{g_j}, n) \in P^{\model{T}}_a$, for some $n$. Also, $\theta_m = P_{(a, g_j)}(y_j)$ by definition. Then, the definition of $\model{\merge(t_1, \ldots t_k)}$ and  (\ref{eq_beta}) imply that $(\theta_m, \beta_m)$ is winning for Verifier.
	
	\item The cases of negated atomic formulas are dual.
\end{itemize} 
\end{proof}

\end{document}